\newtheorem{theorem}{Theorem}
\newtheorem{lemma}{Lemma}
\begin{document}
	
	\title{The Exact Rate Memory Tradeoff for Large Caches with Coded Placement }
	
	\author{Vijith Kumar K P,~
		Brijesh Kumar Rai~and~Tony Jacob}

	\maketitle

	\begin{abstract}
		The idea of coded caching for content distribution networks was introduced by Maddah-Ali and Niesen, who considered the canonical $(N, K)$ cache network in which a server with $N$ files satisfy the demands of $K$ users (equipped with independent caches of size $M$ each). Among other results, their work provided a characterization of the exact rate memory tradeoff for the problem when $M\geq\frac{N}{K}(K-1)$. In this paper, we improve this result for large caches with $M\geq \frac{N}{K}(K-2)$. For the case $\big\lceil\frac{K+1}{2}\big\rceil\leq N \leq K$, we propose a new coded caching scheme, and derive a matching lower bound to show that the proposed scheme is optimal. This extends the characterization of the exact rate memory tradeoff to the case $M\geq \frac{N}{K}\Big(K-2+\frac{(K-2+1/N)}{(K-1)}\Big)$. For the case $1\leq N\leq \big\lceil\frac{K+1}{2}\big\rceil$, we derive a new lower bound, which demonstrates that the scheme proposed by Yu et al. is optimal and thus extend the characterization of the exact rate memory tradeoff to the case $M\geq \frac{N}{K}(K-2)$. 
		    
	\end{abstract}

	\begin{IEEEkeywords}
		Coded caching, coded placement, exact rate memory tradeoff, content distribution networks.
	\end{IEEEkeywords}

	\IEEEpeerreviewmaketitle

	\section{Introduction}
	
	In content distribution networks, some parts of the files in the server are placed in caches distributed across the network during the off-peak traffic time, so as to reduce the load experienced by the network during the peak traffic time. Maddah-Ali and Niesen, in their seminal work \cite{maddah2014fundamental}, noted that traditional caching schemes fail to exploit the multicast coding opportunity available in such networks. They introduced the notion of coded caching in the setting of the canonical $(N,K)$ cache network shown in Fig \ref{fig:canonical}.
	\begin{figure}[ht]
\begin{center}
	\input{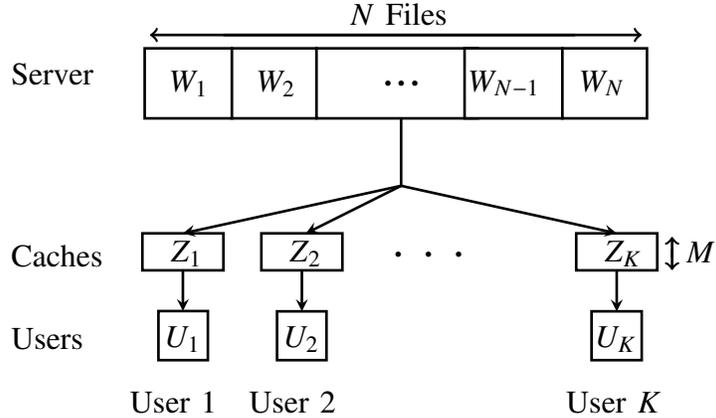}
	\caption{The canonical cache network considered in \cite{maddah2014fundamental}}
	\label{fig:canonical}
\end{center}
		
	\end{figure}
	The server has $N$ files $\{W_{1},\dots, W_{N}\}$ and is connected through an error free broadcast link to $K$ users $\{U_{1},\dots, U_{K}\}$, each with an isolated cache of size $M\in [0,N]$. The cache attached to user $U_{k}$ is denoted by $Z_{k}$. In the first phase of a coded caching scheme, called the placement phase, the server copies some fragments of the files available to it into the caches, without any knowledge of the files that will be required by each user. Let the demands by the users be represented by a vector $\textbf{\textit{d}}=(W_{d_{1}},\dots, W_{d_{K}})$, where $W_{d_{l}}$ is the file requested user $U_{l}$. In the second phase, called the delivery phase, the server broadcasts a set of packets $X_{\textbf{\textit{d}}}$ of size $R_{\textbf{\textit{d}}}(M)$ in response to the demand $\textbf{\textit{d}}$. Each user recovers its required file from the broadcast packets aided by the  contents of its isolated cache. The design of a coded caching scheme involves deciding what to place in the cache attached to each user  during the placement phase and what to broadcast for each possible demand such that the shared link experiences the minimum load during the delivery phase. This formulation of the problem of coded caching has been extended in several ways to study decentralised cache networks \cite{maddah2015decentralized}, hierarchical cache networks \cite{karamchandani2016hierarchical}, cache networks with multiple servers \cite{shariatpanahi2016multi}, coded caching with privacy \cite{ravindrakumar2017private}, heterogeneous cache networks \cite{daniel2019optimization}, networks with shared cache \cite{parrinello2019coded}, cache aided D2D networks \cite{yapar2019optimality} and data shuffling problems with cache aided worker nodes \cite{wan2018fundamental}.

	 In the case of the canonical $(N,K)$ cache network, all files and caches are of the same size. Due to the inherent symmetry of the problem, it is natural to group together all demands that are related to each other through a permutation. In \cite{tian2016symmetry}, Tian showed that corresponding to any caching scheme there exists a symmetric caching scheme which operates with the same or smaller rate. Hence we consider only the class of symmetric caching schemes in this paper. Consider a demand $\textbf{\textit{d}}$, where the user $U_l$ requires the file $W_{d_{l}}$,
	 \begin{equation}
	 \textbf{\textit{d}}=(W_{d_{1}},\dots, W_{d_{K}})
	 \end{equation}
	 Let $\pi(.)$ be a permutation operation defined over the set $\{1,\dots, K\}$ and $\pi^{-1}(.)$ be its inverse. Now consider another demand $\pi \textbf{\textit{d}}$, which is obtained by permuting the files requested by the users,
	 \begin{equation}
	 \pi \textbf{\textit{d}}=(W_{d_{\pi^{-1}(1)}},\dots,W_{d_{\pi^{-1}(K)}}).
	 \end{equation}
	 In the demand $\pi \textbf{\textit{d}}$, the user $U_{\pi(l)}$ requires the file $W_{d_{l}}$. In response to the demand $\pi \textbf{\textit{d}}$, the server broadcasts a set of packets $X_{\pi\textbf{\textit{d}}}$. For a symmetric caching scheme, we have \cite{tian2016symmetry}
	 \begin{equation}
	 \label{symmetry}
	 H(W_{d_{l}},Z_{\pi(l)},X_{\pi\textbf{\textit{d}}})=H(W_{d_{l}},Z_{l},X_{\textbf{\textit{d}}})
	 \end{equation}
	 \begin{equation}
	 \label{symmetry_file}
	 H(W_{n},Z_{l})=H(W_{p},Z_{l})
	 \end{equation}
	 where $n,p\in \{1,\dots, N\}$.

	Consider the demands where each of the $N$ files is required by at least one user (and hence $N\leq K$). The set of all such demands is denoted by $\textbf{\textit{D}}$ and the corresponding rate is denoted by $R(M)$, where 
	\begin{equation}
	R(M)=\max\{R_{\textbf{\textit{d}}}(M)\mid \textbf{\textit{d}}\in \textbf{\textit{D}}\}.
	\end{equation}
	For the $(N,K)$ cache network with cache size $M$, the memory rate pair $(M,R)$ is said to be achievable if there is a scheme with $R(M)\leq R$. For a such a  scheme, we have
	\begin{IEEEeqnarray}{rl}
	\label{M}
	H(Z_{l})\leq& M\\
	\label{R set}
	H(X_{\textbf{\textit{d}}})\leq& R\\
	\label{I.1}
	H(Z_{l},X_{\textbf{\textit{d}}})=&H(W_{d_{l}},Z_{l},X_{\textbf{\textit{d}}}),\\
	\label{I.2}
	H(W_{1},\dots,W_{N},Z_{l},X_{\textbf{\textit{d}}})=&H(W_{1},\dots,W_{N}),
	\end{IEEEeqnarray}
	where (\ref{M}) follows from the fact that size of each cache is $M$, (\ref{R set}) follows from the fact that  for any demand in $\textbf{\textit{D}}$ the size of $X_{\textbf{\textit{d}}}$ is  at most $R(M)\leq R$,  (\ref{I.1})  follows from the fact that the file $W_{\textbf{\textit{d}}_{l}}$ can be computed from $X_{\textbf{\textit{d}}}$ and $Z_{l}$ by the user $U_{l}$, and (\ref{I.2}) follows from the fact that  $Z_{l}$ and $X_{\textbf{\textit{d}}}$ are functions of files $\{W_{1},\dots, W_{N}\}$. For a given cache size $M$, the smallest $R$ such that $(M, R)$ is achievable is called the exact rate memory tradeoff denoted by
	\begin{equation}
	R^{*}(M)=\min \{R: (M, R)\text{ is achievable}\}
	\end{equation}

	Maddah-Ali and Niesen in \cite{maddah2014fundamental} proposed a coding scheme with an uncoded placement phase and a coded delivery phase for the demands in $\textbf{\textit{D}}$, and demonstrated that the rate achieved by the proposed scheme is within a multiplicative gap of 12 from the optimal rate using cut set arguments.  Several improvements to the scheme proposed in \cite{maddah2014fundamental} were presented in \cite{amiri2016coded, wan2016caching,wan2020index,yu2016exact}, and in a surprising result, Yu et al.  obtained the complete  characterization of the exact rate memory tradeoff when the placement phase is restricted to be uncoded \cite{yu2016exact}. In the general case, when coding is permitted in both the placement and the delivery phases, several attempts to reduce the gap where presented by deriving new lower bounds \cite{ ghasemi2017improved, ajaykrishnan2015critical, sengupta2015improved,  wang2018improved, yu2017characterizing}, and proposing new coding schemes  \cite{chen2014fundamental,amiri2016fundamental,gomez2018fundamental,tian2016caching,sahraei2016k, vijith2019towards,kp2019fundamental,Shao_Gomez-Vukardebo_Zhang_Tian_2019,shao2020fundamental}. The performance obtained by these schemes is summarised in TABLE I.
	\begin{table}[t]
		\centering
		\setlength{\tabcolsep}{2pt}
		\begin{tabular}{|c|c|c|c|}
			\hline 
			{Caching Scheme} & {Cache Size, $(M)$} &  Rate Memory Tradeoff& Condition  \\
			\hline&&&\\[-1.4em]   
			Chen et al. \cite{chen2014fundamental}  &   $\left[0,\frac{1}{K}\right]$  & $R^{*}(M)=N-NM$ & $N\leq K$  \\[0.5em]

			\hline&&&\\[-1.4em]    G{\'o}mez-Vilardeb{\'o} \cite{gomez2018fundamental}  & $\left[\frac{1}{N},\frac{1}{(N-1)}\right]$  &$R^{*}(M)=\dfrac{N^{2}-1}{N}-(N-1)M$& $K=N$\\[0.5em]

			\hline&&&\\[-1.4em]    $\begin{array}{c}
			\text{Maddah-Ali}\\ \text{and Niesen \cite{maddah2014fundamental} }
			\end{array}$   &$\left[\frac{N}{K}(K-1),N\right]$, & $R^{*}(M)=1-\frac{1}{N}M$ & 
			-\\[0.5em]
			
			\hline&&&\\[-1.4em]    Yu et al. \cite{yu2016exact}  & $[0,N]$  & $\begin{array}{c}
			R(M)=R_{r}+(R_{r}-R_{r+1})\left(r-\frac{N}{K}M\right)\\
			\text{where }R_r=\frac{{}^{K}\!C_{r+1}-{}^{K-N}\!C_{r+1}}{{}^{K}\!C_{r}}\\
			\text{and $r\in \{1,\dots, K\}$}
			\end{array}$  &$\begin{array}{c}
			\text{Optimal among}\\ \text{uncoded prefetching}\\ \text{schemes}
			\end{array}$\\[0.5em]
			
			\hline&&&\\[-1.4em]    Vijith et al. \cite{vijith2019towards}, \cite{kp2019fundamental} & $\left[N-1-\frac{1}{N},N-1\right]$&$R^{*}(M)=\frac{N+1}{N}-\frac{1}{N-1}M$& $N=K$  \\ [0.5em]
			
			\hline&&&\\[-1.4em]
			\multirow{3}{*}{This paper}   & $\left[\frac{N}{K}\Big((K-2)+\frac{(K-2+1/N)}{(K-1)}\Big), \frac{N}{K}(K-1)\right]$ &$R^{*}(M)=\frac{(KN-1)}{K(N-1)}-\frac{1}{(N-1)}M$& $\big\lceil \frac{K+1}{2}\big\rceil \leq N \leq K $ 
			\\ [0.8em]
			\hhline{~---}&&&\\[-1.4em]
			& $\left[\frac{N}{K}(K-2),\frac{N}{K}(K-1)\right]$ & $R^{*}(M)=\frac{K^{2}+K-2}{K(K-1)}-\frac{(K+1)}{N(K-1)}M$&$\begin{array}{c}
			1\leq N\leq \big\lceil \frac{K+1}{2}\big\rceil
			\end{array}$ 
			\\ [0.5em]
			\hline 
		\end{tabular}
		\caption{Summary of previous work in coded caching}
		\label{Previous works}
	\end{table}

	  The contributions of this paper are as follows:
	 \begin{itemize}
	 		\item We propose a new coding scheme for the $(N,K)$ cache network to achieve the memory rate pair $\Big(\frac{N}{K}\big(K-2+\frac{(K-2+1/N)}{(K-1)}\big),\frac{1}{K-1}\Big)$.  
	 		\item For the case $\big\lceil\frac{K+1}{2}\big\rceil\leq N \leq K$, we derive a matching lower bound and obtain a characterization of the exact rate memory tradeoff when $M\geq \frac{N}{K}\Big(K-2+\frac{(K-2+1/N)}{(K-1)}\Big)$. 
	 		\item For the case $1\leq N\leq \big\lceil\frac{K+1}{2}\big\rceil$, we derive a new lower bound to
	 		match the scheme proposed by Yu et al. \cite{yu2016exact} and obtain a characterization of the exact rate memory tradeoff when $M\geq \frac{N}{K}(K-2)$.
	 \end{itemize}
 	Throughout this paper we use $[L]$ to represent the set $\{1,2,\dots,L\}$,  and $W_{[L]}$ to represent the set $\{W_{1}, W_{2},\dots, W_{L}\}$.

	\section{Example networks}
	As a prelude to the results presented in Section III and IV, we consider two example networks.
	\subsection{The (3,4) Cache Network}
	\label{example(3,4)}
	\noindent Here, users $\{U_{1},U_{2},U_{3},U_{4}\}$ are connected to a server with files $\{A,B,C\}$ (each of size $F$ bits). Each user $U_{k}$ has a cache $Z_{k}$ of size $MF$ bits. We now describe a symmetric caching scheme for the case  $M=\frac{25}{12}$. During the placement phase, every file is split into 12 disjoint subfiles, each of size $\frac{1}{12}F$ bits. The subfiles are:
	\begin{table}[ht]
		\centering
		\begin{tabular}{|c|c|}
		\hline
		File& Subfiles\\
		\hline
		$A$&$A^{12},$ $A^{21},$ $A^{13},$ $A^{31},$ $A^{14},$ $A^{41},$ $ 
		A^{23},$ $A^{32},$ $A^{24},$ $A^{42},$ $A^{34},$ $A^{43}$
		\\
		\hline
		$B$&$B^{12},$ $B^{21},$ $B^{13},$ $B^{31},$ $B^{14},$ $B^{41},$ $ 
		B^{23},$ $B^{32},$ $B^{24},$ $B^{42},$ $B^{34},$ $B^{43}$
		\\
		\hline
		$C$&$C^{12},$ $C^{21},$ $C^{13},$ $C^{31},$ $C^{14},$ $C^{41},$ $ 
		C^{23},$ $C^{32},$ $C^{24},$ $C^{42},$ $C^{34},C^{43}$\\
		\hline
	\end{tabular}
	\end{table}

	\noindent The server places  18 uncoded packets (stage 1) and 7 coded packets (stage 2) in each user's cache as shown in TABLE \ref{table:cache content}. Each of these packets are of size $\frac{1}{12}F$ bits and they together occupy $\frac{25}{12}F$ bits.
		\begin{table}[ht]
		\centering
		\begin{tabular}{|c|c|c|c|c|c|c|c|c|c|}
			\hline
			Cache&\multicolumn{6}{c|}{Stage 1}&\multicolumn{3}{c|}{Stage 2}\\
			\hline
			\multirow{3}{*}{$Z_{1}$}&$A^{23}$&$A^{24}$&$A^{32}$&$A^{34}$&$A^{42}$&$A^{43}$&$A^{12}- A^{13}$&$A^{12}- A^{14}$&\\
			\hhline{~---------} 
			&$B^{23}$&$B^{24}$&$B^{32}$&$B^{34}$&$B^{42}$&$B^{43}$&$B^{12}- B^{13}$&$B^{12}- B^{14}$&\\
			\hhline{~---------}
			&$C^{23}$&$C^{24}$&$C^{32}$&$C^{34}$&$C^{42}$&$C^{43}$&$C^{12}- C^{13}$&$C^{12}- C^{14}$& $A^{12}+ B^{12}+ C^{12}$\\
			\hline 
			\multirow{3}{*}{$Z_{2}$}&$A^{13}$&$A^{14}$&$A^{31}$&$A^{34}$&$A^{41}$&$A^{43}$&$A^{23}- A^{24}$&$A^{23}- A^{21}$&\\
			\hhline{~---------}
			&$B^{13}$&$B^{14}$&$B^{31}$&$B^{34}$&$B^{41}$&$B^{43}$&$B^{23}- B^{24}$&$B^{23}- B^{21}$&\\
			\hhline{~---------} 
			&$C^{13}$&$C^{14}$&$C^{31}$&$C^{34}$&$C^{41}$&$C^{43}$&$C^{23}- C^{24}$&$C^{23}- C^{21}$& $A^{23}+ B^{23}+ C^{23}$\\
			\hline
			\multirow{3}{*}{$Z_{3}$}&$A^{21}$&$A^{24}$&$A^{12}$&$A^{14}$&$A^{42}$&$A^{41}$&$A^{34}- A^{31}$&$A^{34}- A^{32}$&\\
			\hhline{~---------} 
			&$B^{21}$&$B^{24}$&$B^{12}$&$B^{14}$&$B^{42}$&$B^{41}$&$B^{34}- B^{31}$&$B^{34}- B^{32}$&\\
			\hhline{~---------}
			&$C^{21}$&$C^{24}$&$C^{12}$&$C^{14}$&$C^{42}$&$C^{41}$&$C^{34}- C^{31}$&$C^{34}- C^{32}$& $A^{34}+ B^{34}+ C^{34}$\\
			\hline
			\multirow{3}{*}{$Z_{4}$}&$A^{23}$&$A^{21}$&$A^{32}$&$A^{31}$&$A^{12}$&$A^{13}$&$A^{41}- A^{42}$&$A^{41}- A^{43}$&\\
			\hhline{~---------}
			&$B^{23}$&$B^{21}$&$B^{32}$&$B^{31}$&$B^{12}$&$B^{13}$&$B^{41}- B^{42}$&$B^{41}- B^{43}$&\\
			\hhline{~---------}
			&$C^{23}$&$C^{21}$&$C^{32}$&$C^{31}$&$C^{12}$&$C^{13}$&$C^{41}- C^{42}$&$C^{41}- C^{43}$& $A^{41}+ B^{41}+ C^{41}$\\
			\hline
			
		\end{tabular}
		\caption{Cache contents placed in stage 1 and stage 2}
		\label{table:cache content}
	\end{table}

	\noindent To understand how the delivery phase works, consider a demand $\textbf{\textit{d}}=(P,P,Q,R)$ where $P$, $Q$ and $R$  are distinct files in $\{A,B,C\}$. In response to this demand, the server broadcasts a set of packets
	\begin{equation*}
	X_{\textbf{\textit{d}}}=\left\{\begin{array}{c}
	Q^{13}+R^{14}-P^{12}\\ Q^{23}+R^{24}-P^{21}\\
	R^{34}+\frac{1}{2}P^{31}+\frac{1}{2}P^{32} \\ Q^{43}+\frac{1}{2}P^{41}+\frac{1}{2}P^{42}
	\end{array}\right\}.
	\end{equation*}
	\noindent As $X_{\textbf{\textit{d}}}$ has four packets, of size $\frac{1}{12}F$ bits each, the load experienced by the shared link is $RF=\frac{1}{3}F$ bits and thus the rate is $R=\frac{1}{3}$.
	
	Let us consider $U_1$ to understand how the requested file is obtained from $X_{\textbf{\textit{d}}}$ and $Z_{1}$. Note that the user has subfiles $P^{23},P^{32},P^{24},P^{42},P^{34}
	\text{ and }P^{43}$ in $Z_{1}$ and require subfiles $P^{12}$, $P^{21}$, $P^{13}$, $P^{31}$, $P^{14}$ and $P^{41}$ to compute the requested file $P$. The user can compute subfiles $P^{21}$, $P^{31}$, $P^{41}$ and $P^{12}$  by combining received packets and cached packets as shown below:

	\begin{table}[!ht]
		\centering
		\begin{tabular}{|c|c|c|}
			\hline
			Received Packet& Cached Packets & Computed Subfile\\
			\hline
			$Q^{23}+R^{24}-P^{21}$&$Q^{23}$, $R^{24}$&$P^{21}$\\
			\hline
			$R^{34}-\frac{1}{2}P^{31}-\frac{1}{2}P^{32}$&$R^{34}$, $P^{32}$& $P^{31}$\\
			\hline
			$Q^{43}-\frac{1}{2}P^{41}-\frac{1}{2}P^{42}$&$Q^{43}$, $P^{42}$&$P^{41}$\\
			\hline
			$Q^{13}+R^{14}-P^{12}$&$Q^{12}-Q^{13}$, $R^{12}-R^{14}$, $P^{12}+Q^{12}+R^{12}$& $P^{12}$\\
			\hline
		\end{tabular}
	\end{table}

	 \noindent Combining the subfile $P^{12}$ with cached packets $P^{12}-P^{13}$ and $P^{12}-P^{14}$, $U_{1}$ obtains subfiles $P^{13}$ and $P^{14}$. The other users can proceed in similar fashion. We summarise as:
	\begin{lemma}
		\label{eg_case1_rate}
		The memory rate pair $\big(\frac{25}{12},\frac{1}{3}\big)$ is achievable by symmetric caching schemes for the $(3,4)$ cache network.
	\end{lemma}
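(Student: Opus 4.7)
The scheme has already been fully specified in the example; what remains is to verify its properties. My plan is to check in sequence: (a) the cache budget, (b) the broadcast rate, (c) correct decoding at every user for every demand in $\textbf{\textit{D}}$, and (d) the symmetry conditions (\ref{symmetry})--(\ref{symmetry_file}).

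For (a), each cache stores $18$ stage-$1$ packets and $7$ stage-$2$ packets, each of size $\frac{1}{12}F$ bits, giving a total of $\frac{25}{12}F$ bits, so that $M=\frac{25}{12}$. For (b), the demand set $\textbf{\textit{D}}$ for the $(3,4)$ network is exactly the set of permutations of tuples $(P,P,Q,R)$ with $\{P,Q,R\}=\{A,B,C\}$; the broadcast $X_{\textbf{\textit{d}}}$ displayed in the example has four packets of $\frac{1}{12}F$ bits each, so $R=\frac{1}{3}$. For (d), inspection of the cache table and of $X_{\textbf{\textit{d}}}$ shows that both the placement and the delivery are designed to be invariant under simultaneous relabelling of user indices and file indices, from which (\ref{symmetry})--(\ref{symmetry_file}) follow directly.

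The substantive step is (c). The example already exhibits decoding at $U_{1}$ under the representative demand $(P,P,Q,R)$. To finish I would repeat this verification for $U_{2}, U_{3}, U_{4}$, and then observe that any demand in $\textbf{\textit{D}}$ reduces by file-relabelling and user-relabelling to this representative case. At each user the decoding splits into two layers: first, the missing subfiles of the requested file whose index pair involves only a singly-demanded file ($Q$ or $R$) are recovered directly from the broadcast packet addressed to that user, combined with stage-$1$ cached subfiles; second, the remaining missing subfiles, which lie at index pairs shared with the doubly-demanded file $P$, are recovered by combining the received broadcast packet with the stage-$2$ cached difference packets and the stage-$2$ cached sum $A^{kl}+B^{kl}+C^{kl}$.

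The technical heart of the argument, and the main obstacle, is the second layer. The coefficients $\pm 1$ and $\pm\tfrac{1}{2}$ appearing in the four broadcast packets are tuned precisely so that combining the received packet with stage-$2$ differences of the form $P^{kl}-P^{kl'}$ and with the stage-$2$ sum isolates, after the $B$- and $C$-contributions cancel through reuse of other cached combinations, exactly $2\,P^{kl_{0}}$ for the appropriate partner index $l_{0}$; once this subfile is in hand, the remaining missing subfiles of $P$ at that user are obtained from the cached differences. The cases of $U_{3}$ and $U_{4}$, where the $\tfrac{1}{2}$-coefficients appear, require slightly more care than those of $U_{1}$ and $U_{2}$, and verifying that the resulting linear system has the required rank at each user is essentially the only nontrivial calculation.
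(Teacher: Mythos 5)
Your proposal follows the same constructive-verification route as the paper: check the cache budget ($25$ packets of $\tfrac{1}{12}F$ bits $= \tfrac{25}{12}F$), the broadcast size ($4$ packets $= \tfrac{1}{3}F$), and decoding for a representative demand $(P,P,Q,R)$, then close by the symmetry of the scheme and the observation that every demand in $\textbf{\textit{D}}$ for $(3,4)$ is of this form up to relabelling. That plan is sound and essentially the paper's.

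The one place where your sketch goes astray is the characterization of the two decoding layers. The split is \emph{not} by singly- versus doubly-demanded files; it is by superscript order. User $U_k$ first recovers $W_{d_k}^{jk}$ for $j\neq k$, each from the packet $X^j$ plus stage-1 cached subfiles, and then recovers $W_{d_k}^{kj}$ for $j\neq k$, from $X^k$ plus the stage-2 difference and sum packets. Concretely, for $U_1$ under $(P,P,Q,R)$ the first-stage subfiles are $P^{21},P^{31},P^{41}$ and the second-stage ones are $P^{12},P^{13},P^{14}$: note that $P^{21}$ is first-stage even though both $U_1$ and $U_2$ demand the doubly-demanded $P$, while $P^{13},P^{14}$ are second-stage even though $U_3,U_4$ demand the singly-demanded $Q,R$. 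Under your criterion the first layer for $U_1$ would in fact be empty, since every missing subfile of $U_1$ carries the index $1$, which is tied to the doubly-demanded file. Once the split is stated by superscript order, the role of the coefficients $\alpha_k^s/N_k^s$ in $X^k$ becomes transparent (they let each $W_{d_s}^{ks}$ be replaced by $W_{d_s}^{k(k+1)}$ via a stage-2 difference and then cancelled via the stage-2 sum, leaving $W_{d_k}^{k(k+1)}$), and the explicit per-user verification followed by symmetry closes the proof.
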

	The caching scheme proposed in \cite{maddah2014fundamental} achieves the memory rate pair $\big(\frac{9}{4},\frac{1}{4}\big)$, and by memory sharing between that scheme and the proposed scheme, we can  achieve all memory rate pairs $\big(M,\frac{11}{8}-\frac{1}{2}M\big)$, where $M\in \big[\frac{25}{12},\frac{9}{4}\big]$. We obtain a matching lower bound in the following lemma:

	\begin{lemma}
		\label{eg_case1_converse}
		For the $(3,4)$ cache network, achievable memory rate pairs $(M,R)$ must satisfy the constraint
		\begin{equation*}
		4M+8R\geq 11.
		\end{equation*}
	\end{lemma}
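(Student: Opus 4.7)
The plan is to prove $4M+8R\geq 11$, or equivalently $H(Z_l)+2H(X_{\textbf{\textit{d}}})\geq \tfrac{11}{4}H(W_1)$, by combining the entropy constraints (\ref{M})--(\ref{I.2}) with the symmetry relations (\ref{symmetry})--(\ref{symmetry_file}) over a carefully selected collection of demands from $\textbf{\textit{D}}$. Because the caching scheme is symmetric, every cache $Z_l$ has the same entropy and every $X_{\textbf{\textit{d}}}$ with $\textbf{\textit{d}}\in \textbf{\textit{D}}$ has the same entropy, so the task reduces to exhibiting $8$ demands $\textbf{\textit{d}}_1,\dots,\textbf{\textit{d}}_8\in \textbf{\textit{D}}$ for which
\begin{equation*}
\sum_{l=1}^{4} H(Z_l) + \sum_{j=1}^{8} H(X_{\textbf{\textit{d}}_j}) \;\geq\; 11\,H(W_1).
\end{equation*}

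I would pick the $8$ demands so that their multiset is closed under a rich group of user/file permutations---for instance, two orbits of four demands each under cyclic user shifts---so that every (user, file) pair appears with the same multiplicity and the symmetry reductions collapse many terms at once. For each relevant pair $(l,j)$ the decoding identity (\ref{I.1}) then replaces $H(Z_l,X_{\textbf{\textit{d}}_j})$ by $H(W_{d_l^j},Z_l,X_{\textbf{\textit{d}}_j})$, which is the mechanism that injects file entropies into the chain. I would then apply a carefully ordered sequence of submodularity inequalities $H(A)+H(B)\geq H(A\cup B)+H(A\cap B)$ to pairs of joint random variables overlapping in either the cache index or the demand index. The symmetries (\ref{symmetry})--(\ref{symmetry_file}) identify conditional terms across orbits, and the functional identity (\ref{I.2}) pins down the final lower bound in terms of $H(W_1,W_2,W_3)=3H(W_1)$, with the accumulated multiplicities giving the factor $11$.

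The main obstacle will be choosing the right collection of $8$ demands and ordering the submodular pairings so that the coefficients line up exactly on $4$, $8$, and $11$. Simple cut-set arguments only yield $M+2R\geq 2$ (one cache and two deliveries decoding two files) or $M+3R\geq 3$ (one cache and three deliveries decoding all three files), both strictly weaker than the target $M+2R\geq \tfrac{11}{4}$ on the critical segment $M\in[\tfrac{25}{12},\tfrac{9}{4}]$. The missing $\tfrac{3}{4}$ must be squeezed out by the submodularity chain, and identifying the specific chain that both respects the symmetry structure and delivers exactly this slack is the delicate part of the argument.
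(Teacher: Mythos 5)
Your high-level framework---weighting the cache and rate constraints, then driving a submodularity chain that repeatedly invokes the decoding identity~(\ref{I.1}) to inject file entropies, the functionality identity~(\ref{I.2}) to collapse them, and the symmetry relations~(\ref{symmetry})--(\ref{symmetry_file}) to identify terms---is precisely the one the paper uses, and your reduction to ``exhibit caches and demands whose entropies sum to at least $11$'' is a valid preliminary step. However, what you have submitted is a description of a strategy rather than a proof. You explicitly defer the substance: you write that ``identifying the specific chain that both respects the symmetry structure and delivers exactly this slack is the delicate part of the argument,'' and no such chain is exhibited. The lemma is not established until the concrete sequence of submodularity pairings, the choice of demands, the cache indices attached to each term, and the order in which (\ref{I.1}) and (\ref{I.2}) are invoked are written out and verified to produce exactly the coefficients $4$, $8$, and $11$.

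For comparison, the paper's argument starts from $2H(Z_1)+H(Z_2)+H(Z_3)+2H(X_{(A,B,C,A)})+2H(X_{(B,C,A,A)})+3H(X_{(A,A,B,C)})+H(X_{(C,A,A,B)})$, that is, only four distinct demands (each with one file requested by two users) with nonuniform multiplicities $2,2,3,1$ on the deliveries and $2,1,1,0$ on the caches, and then executes roughly a dozen carefully interleaved submodularity and decoding steps to land on $3H(A,B,C)+H(A,B)=11$. Your guess of ``two orbits of four demands each under cyclic user shifts'' would give eight distinct demands with uniform weights, which is not what the paper does and is not obviously sufficient; the nonuniform weights are not a cosmetic choice but a reflection of how many times each $X_{\textbf{\textit{d}}}$ must appear for the submodular recombinations to close. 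Your diagnosis that cut-set arguments fall $\tfrac{3}{4}$ short is correct, but the entire nontrivial content of the lemma lies in the bookkeeping that your proposal omits.
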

	\begin{proof}
We have,
		\begin{IEEEeqnarray*}{rl}
		4M+8R\overset{(a)}{\geq}& 2H(Z_{1})+H(Z_{2})+H(Z_{3})+2H(X_{(A,B,C,A)})+2H(X_{(B,C,A,A)})+3H(X_{(A,A,B,C)})\\&+H(X_{(C,A,A,B)})\\
				\overset{(b)}{\geq}& H(Z_{1},X_{(A,B,C,A)},X_{(B,C,A,A)})+H(Z_{1},X_{(B,C,A,A)},X_{(A,A,B,C)})+H(Z_{2},X_{(A,B,C,A)},X_{(A,A,B,C)})\\&+H(Z_{3},X_{(C,A,A,B)},X_{(A,A,B,C)})\\	
		\overset{(c)}{=}& H(A,B,Z_{1},X_{(A,B,C,A)},X_{(B,C,A,A)})+H(A,B,Z_{1},X_{(B,C,A,A)},X_{(A,A,B,C)})\\&+H(A,B,Z_{2},X_{(A,B,C,A)},X_{(A,A,B,C)})+H(A,B,Z_{3},X_{(C,A,A,B)},X_{(A,A,B,C)})	\\
		\overset{(b)}{\geq}& H(A,B,Z_{1},X_{(B,C,A,A)})+H(A,B,Z_{1},X_{(A,B,C,A)},X_{(B,C,A,A)},X_{(A,A,B,C)})\\&+H(A,B,Z_{2},X_{(A,B,C,A)},X_{(A,A,B,C)})+H(A,B,Z_{3},X_{(C,A,A,B)},X_{(A,A,B,C)})\\
		\overset{}{\geq}& H(A,B,X_{(A,B,C,A)},X_{(B,C,A,A)},X_{(A,A,B,C)})+H(A,B,Z_{2},X_{(A,B,C,A)},X_{(A,A,B,C)})\\&+H(A,B,Z_{3},X_{(C,A,A,B)},X_{(A,A,B,C)})+H(A,B,Z_{1},X_{(B,C,A,A)})\\
		\overset{(b)}{\geq}& H(A,B,Z_{2},X_{(A,B,C,A)},X_{(B,C,A,A)},X_{(A,A,B,C)})+H(A,B,X_{(A,B,C,A)},X_{(A,A,B,C)})\\&+H(A,B,Z_{3},X_{(C,A,A,B)},X_{(A,A,B,C)})+H(A,B,Z_{1},X_{(B,C,A,A)})\\
		\overset{(c)}{=}& H(A,B,C,Z_{2},X_{(A,B,C,A)},X_{(B,C,A,A)},X_{(A,A,B,C)})+H(A,B,X_{(A,B,C,A)},X_{(A,A,B,C)})\\&+H(A,B,Z_{3},X_{(C,A,A,B)},X_{(A,A,B,C)})+H(A,B,Z_{1},X_{(B,C,A,A)})\\
		\overset{(d)}{=}& H(A,B,C)+H(A,B,X_{(A,B,C,A)},X_{(A,A,B,C)})+H(A,B,Z_{3},X_{(C,A,A,B)},X_{(A,A,B,C)})\\&+H(A,B,Z_{1},X_{(B,C,A,A)})\\
		\overset{(b)}{\geq}& H(A,B,C)+H(A,B,Z_{3},X_{(C,A,A,B)},X_{(A,B,C,A)},X_{(A,A,B,C)})+H(A,B,X_{(A,A,B,C)})\\&+H(A,B,Z_{1},X_{(B,C,A,A)})\\	
		\overset{(c)}{=}& H(A,B,C)+H(A,B,C,Z_{3},X_{(C,A,A,B)},X_{(A,B,C,A)},X_{(A,A,B,C)})+H(A,B,X_{(A,A,B,C)})\\&+H(A,B,Z_{1},X_{(B,C,A,A)})\\
		\overset{(d)}{=}& 2H(A,B,C)+H(A,B,X_{(A,A,B,C)})+H(A,B,Z_{1},X_{(B,C,A,A)})\\
		\overset{}{\geq}& 2H(A,B,C)+H(A,B,X_{(A,A,B,C)})+H(A,B,Z_{1})\\
		\overset{(e)}{=}& 2H(A,B,C)+H(A,B,X_{(A,A,B,C)})+H(A,B,Z_{4})\\
		\overset{(b)}{\geq}& H(A,B,Z_{4},X_{(A,A,B,C)})+H(A,B)+2H(A,B,C)\\
		\overset{(c)}{=}& H(A,B,C,Z_{4},X_{(A,A,B,C)})+H(A,B)+2H(A,B,C)\\
		\overset{(d)}{=}& H(A,B,C)+H(A,B)+2H(A,B,C)
		\overset{}{\geq}11
		\end{IEEEeqnarray*}	
		where
		
		\hspace{-4mm}\begin{tabular}{rl}
		     $(a)$& follows from (\ref{M}) and (\ref{R set}),\\ 
		     $(b)$& follows from the submodularity property of entropy,\\
		     $(c)$& follows from (\ref{I.1}),\\
		     $(d)$& follows from (\ref{I.2}),\\
		     $(e)$& follows from (\ref{symmetry}).
		\end{tabular}
	
	\end{proof}

The above observations improve upon the previous results from \cite{maddah2014fundamental,yu2016exact} and is summarised in TABLE \ref{Rate achieved for the $(3,4)$ cache network} and Fig. \ref{fig:(3,4)bound}.
	\begin{table}[ht]
		\centering
		\begin{tabular}{|c|c|c|c|c|}
			\hline
			Memory &Rate \cite{maddah2014fundamental,yu2016exact}&Lower Bound\cite{maddah2014fundamental,yu2016exact}&New Rate& New Lower Bound\\
			\hline&&&&\\[-1.em]
			$\dfrac{25}{12}\leq M\leq \dfrac{9}{4}$&$\dfrac{3}{2}-\dfrac{5}{9}M$ &$R\geq1-\dfrac{1}{3}M$&$\dfrac{11}{8}-\dfrac{1}{2}M$&$R\geq\dfrac{11}{8}-\dfrac{1}{2}M$\\[0.4em]
			\hline
		\end{tabular}
	\caption{Rate memory tradeoff for the $(3,4)$ cache network}
	\label{Rate achieved for the $(3,4)$ cache network}
	\end{table}

\begin{figure}[ht]
	\centering
	\begin{tikzpicture}[line cap=round,line join=round,x=3.55cm,y=1.7cm,
    spy/.style={%
        draw,green,
        line width=1pt,
        rectangle,inner sep=0pt,
    },
]
   
    \def\spyviewersize{2.4cm}
    \def\spyonclipreduce{0.8pt}

    \def\spyfactorI{50}
    \coordinate (spy-on 1) at (25/12,1/3);
    \coordinate (spy-in 1) at (2.3,1.5);

    \def\pic{\coordinate (O) at (0,0);
       \draw [ultra thin,step=3,black] (0,0) grid (3,3);

   \foreach \x in {0,1/4,3/4,6/4,9/4,3}
   \draw[shift={(\x,0)},color=black,thin] (0pt,1pt) -- (0pt,-1pt)
                                   node[below] {\footnotesize $\x$};

      \foreach \y in {0,1,2,3}
      \draw[shift={(0,\y)},color=black,thin] (1pt,0pt) -- (-1pt,0pt)
                                    node[left] {\footnotesize $\y$};
  \draw[color=black] (6cm,-18pt) node[left] { Cache size $M$};
  \draw[color=black] (-18pt,3.cm) node[left,rotate=90] { Rate $R$};

   \draw[smooth,blue!70,mark=otimes,samples=1000,domain=0.0:2.2,mark = $\otimes$,line width=2pt]
      {(0,3)--(1/4,9/4)  node[mark size=2.5pt,line width=2pt,label={right:$(\frac{1}{4},\frac{9}{4})$}]{$\pgfuseplotmark{square*}$} (9/4,1/4)node[mark size=2.5pt,line width=2pt,label={above right:$(\frac{9}{4},\frac{1}{4})$}]{$\pgfuseplotmark{square*}$}--(3,0)};

      \draw[smooth,red!60!black,samples=1000,domain=0.0:2.2,mark size=2pt,mark =otimes*,line width=2pt]
      {(25/12,1/3)node[mark size=4pt,line width=2pt,label={above:$(\frac{25}{12},\frac{1}{3})$}]{$\pgfuseplotmark{triangle*}$}--(9/4,1/4)};
      
    }

    \pic

\draw [gray!50!white,line width=1pt,fill=white] (0.8,2.2)rectangle (3,3);
\begin{scope}[shift={(0.6,2.4)}] 
\draw [smooth,samples=1000,domain=0.0:2.2,red!60!black,mark=otimes,line width=2pt] 
{(0.25,0) --node [mark size=4pt,line width=0.1pt]{$\pgfuseplotmark{triangle*}$} (0.5,0)}
node[right]{New Rate Memory Tradeoff};

\draw [yshift=0.75\baselineskip,smooth,blue!70,samples=1000,domain=0.0:2.2,mark=otimes,line width=2pt] 
{(0.25,0) --node [mark size=2.5pt,line width=0.1pt]{$\pgfuseplotmark{square*}$} (0.5,0)}
node[right]{Rate Memory Tradeoff \cite{maddah2014fundamental,yu2016exact,chen2014fundamental}};
	\end{scope}

\end{tikzpicture}
	\caption{Exact rate memory tradeoff for the $(3,4)$ cache network}
	\label{fig:(3,4)bound}
\end{figure}
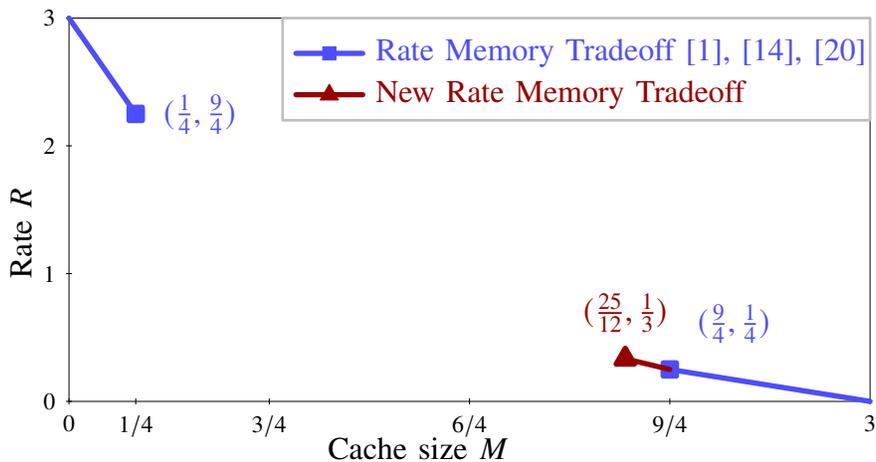

		\subsection{The (2,4) Cache Network}
		\noindent Here, users $\{U_{1},U_{2},U_{3},U_{4}\}$ are connected to a server with files $\{A,B\}$ (each of size $F$ bits). Each user $U_{k}$ has cache $Z_{k}$ of size  $MF$ bits. The caching scheme proposed in \cite{yu2016exact} can achieve all memory rate pairs $\Big(M,\frac{3}{2}-\frac{5}{6}M\Big)$, where $M\in \big[1,\frac{3}{2}\big]$. We obtain a matching lower bound in the following lemma:
		
		\begin{lemma}
			\label{eg_case2_converse}
			For the $(2,4)$ cache network, achievable memory rate pairs $(M,R)$ must satisfy the constraint
			\begin{equation*}
			5M+6R\geq 9.
			\end{equation*}
		\end{lemma}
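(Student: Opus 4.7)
The plan is to follow the template of the proof of Lemma~\ref{eg_case1_converse}. I begin by bounding
\[
5M+6R \;\geq\; \sum_{i}\alpha_i H(Z_{k_i}) + \sum_j \beta_j H(X_{\textbf{\textit{d}}_j}),
\]
using (\ref{M}) and (\ref{R set}), where the non-negative integer weights satisfy $\sum_i\alpha_i=5$ and $\sum_j\beta_j=6$, and each demand vector $\textbf{\textit{d}}_j$ lies in $\textbf{\textit{D}}$ (i.e.\ contains both $A$ and $B$). The demands and user indices have to be chosen so that every user can at some stage be paired with a broadcast message that delivers both files to it after suitable merges.

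I would then collapse the resulting sum using exactly the four mechanisms employed above: the submodularity of entropy (as in the steps labelled $(b)$), which merges a cache term and a delivery term into a union entropy plus an intersection entropy; equation~(\ref{I.1}), which absorbs the demanded file $W_{d_l}$ into any entropy already containing $Z_l$ and $X_{\textbf{\textit{d}}}$ (steps $(c)$); equation~(\ref{I.2}), which replaces any entropy that already contains all of $W_1,\dots,W_N$ by $H(A,B)$ (steps $(d)$); and the symmetry relation~(\ref{symmetry}), which lets us swap the user index on a cache term whenever the merging order requires it (step $(e)$). Since $N=2$, a single absorption via~(\ref{I.1}) brings in one file and a second merge against another user's cache brings in the other, so every branch of the reduction collapses to $H(A,B)$ or $H(A)$ after only a short chain of merges.

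The final accounting has to deliver at least $9$. Because $H(A,B)=2$ and $H(A)=H(B)=1$ in these normalised units, the natural targets are combinations such as $4H(A,B)+H(A)=9$ or $3H(A,B)+3H(A)=9$. The total weight $5+6=11$ of the starting sum matches such a target since each successful submodularity step replaces two entropies by two entropies (preserving count), while uses of~(\ref{I.1}), (\ref{I.2}) and~(\ref{symmetry}) leave the weight unchanged, and only the final contractions of the form $H(A,B,Z,X)\!\to\!H(A,B,C,\ldots,Z,X)\!\to\!H(A,B)$ collapse the bookkeeping to the desired constants. A natural starting selection is the four ``$(3,1)$-type'' demands $(B,A,A,A),(A,B,A,A),(A,A,B,A),(A,A,A,B)$ together with two ``$(2,2)$-type'' demands and the caches $H(Z_1)+H(Z_2)+H(Z_3)+H(Z_4)$ plus one extra copy of some $H(Z_l)$, the extra copy being there to supply the last $H(A)$ needed.

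The main obstacle is precisely this bookkeeping. With only $N=2$ files, each broadcast $X_{\textbf{\textit{d}}}$ splits the users into just an ``$A$-side'' and a ``$B$-side'', so the room to manoeuvre is much smaller than in the $(3,4)$ setting: the intersection entropies produced by submodularity at each step must line up so that one more application of~(\ref{I.1}) already completes the file set. Finding demand vectors and a merge ordering that make this happen consistently across all five branches, without over- or under-shooting the target~$9$, is the delicate combinatorial heart of the argument; once such a schedule is fixed, the proof proceeds by the same mechanical chain of $(a),(b),(c),(d),(e)$ substitutions illustrated in the proof of Lemma~\ref{eg_case1_converse}.
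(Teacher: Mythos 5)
Your outline correctly identifies the high-level template — bound $5M+6R$ by a weighted combination of cache and delivery entropies, then collapse via submodularity, (\ref{I.1}), (\ref{I.2}), and the symmetry relations until only $H(A,B)$ and $H(A)$ terms remain summing to at least~$9$. You even land on the right final accounting, $4H(A,B)+H(A)=9$. But the proposal stops at precisely the point where the proof actually lives, and the specifics you do commit to are off in ways that matter.

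First, the demand selection. The paper does not use any ``$(2,2)$-type'' demand, nor the demand $(B,A,A,A)$. It uses only three $(3,1)$-type demands with multiplicities: $3H(X_{(A,A,B,A)})+2H(X_{(A,A,A,B)})+H(X_{(A,B,A,A)})$. Likewise the cache weights are not ``each $Z_l$ once plus one extra'': the paper starts from $3H(Z_1)+2H(Z_2)+H(Z_4)-H(Z_1)$, so $Z_3$ never appears explicitly (it is brought in later via (\ref{symmetry})) and $Z_1$ carries a \emph{negative} surplus that is cancelled only at the very end. Second, you list the mechanisms as submodularity, (\ref{I.1}), (\ref{I.2}) and (\ref{symmetry}), but you omit (\ref{symmetry_file}), which is essential here: the step $2H(A,Z_1) = H(A,Z_1)+H(B,Z_1)$ is what allows a subsequent submodular merge to produce $H(A,B,Z_1)+H(Z_1)$, and the spare $+H(Z_1)$ is exactly what cancels the $-H(Z_1)$ you have been carrying. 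Without this trick the weights do not close. Third, the heuristic that ``submodularity preserves the count'' is correct but insufficient guidance: with $N=2$, one absorption of (\ref{I.1}) already gives $H(A,Z_l,X)$, and you must then find a \emph{different} delivery term (one in which $U_l$ requests $B$) to merge against before (\ref{I.1}) can bring in $B$ — this is what forces the particular multiset of demands, and is precisely the ``delicate combinatorial heart'' you acknowledge you have not worked out. As it stands, the proposal is a plausible strategy sketch but not a proof; it would need the concrete merge schedule, the $-H(Z_1)$/(\ref{symmetry_file}) device, and the corrected demand and cache weights to close the gap.
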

		\begin{proof}
			We have,
			\begin{IEEEeqnarray*}{rl}
			5M+&6R\overset{(a)}{\geq} 3H(Z_{1})+2H(Z_{2})+H(Z_{4})+3H(X_{(A,A,B,A)}) +2H(X_{(A,A,A,B)})+H(X_{(A,B,A,A)})-H(Z_{1})\\
					\overset{(b)}{\geq}& H(Z_{1},X_{(A,A,B,A)})+H(Z_{1},X_{(A,A,A,B)})+H(Z_{1},X_{(A,B,A,A)})+H(Z_{2},X_{(A,A,B,A)})+H(Z_{2},X_{(A,A,A,B)})\\&+H(Z_{4},X_{(A,A,B,A)})-H(Z_{1})\\
			\overset{(c)}{=}& H(A,Z_{1},X_{(A,A,B,A)})+H(A,Z_{1},X_{(A,A,A,B)})+H(A,Z_{1},X_{(A,B,A,A)})+H(A,Z_{2},X_{(A,A,B,A)})\\&+H(A,Z_{2},X_{(A,A,A,B)})+H(A,Z_{4},X_{(A,A,B,A)})-H(Z_{1})\\
			\overset{(b)}{\geq}& H(A,Z_{1},X_{(A,A,B,A)},X_{(A,A,A,B)},X_{(A,B,A,A)})+2H(A,Z_{1})+H(A,Z_{2},X_{(A,A,B,A)},X_{(A,A,A,B)})\\&+H(A,Z_{2})+H(A,Z_{4},X_{(A,A,B,A)})-H(Z_{1})\\
			\overset{(d)}{=}& H(A,Z_{1},X_{(A,A,B,A)},X_{(A,A,A,B)},X_{(A,B,A,A)})+H(A,Z_{2},X_{(A,A,B,A)},X_{(A,A,A,B)})\\&+H(A,Z_{4},X_{(A,A,B,A)})+H(A,Z_{2})+H(A,Z_{1})+H(B,Z_{1})-H(Z_{1})\\
			\overset{(e)}{=}& H(A,Z_{1},X_{(A,A,B,A)},X_{(A,A,A,B)},X_{(A,B,A,A)})+H(A,Z_{2},X_{(A,A,B,A)},X_{(A,A,A,B)})\\&+H(A,Z_{4},X_{(A,A,B,A)})+H(A,Z_{3})+H(A,Z_{1})+H(B,Z_{1})-H(Z_{1})\\
			\overset{(b)}{\geq}& H(A,Z_{1},X_{(A,A,B,A)},X_{(A,A,A,B)},X_{(A,B,A,A)})+H(A,Z_{2},X_{(A,A,B,A)},X_{(A,A,A,B)})+H(A,Z_{4},X_{(A,A,B,A)})\\&+H(A,Z_{3})+H(A,B,Z_{1})+H(Z_{1})-H(Z_{1})\\
			\overset{(f)}{=}& H(A,Z_{1},X_{(A,A,B,A)},X_{(A,A,A,B)},X_{(A,B,A,A)})+H(A,Z_{2},X_{(A,A,B,A)},X_{(A,A,A,B)})+H(A,Z_{4},X_{(A,A,B,A)})\\&+H(A,Z_{3})+H(A,B)\\
			\overset{(b)}{\geq}& H(A,Z_{1},Z_{2},X_{(A,A,B,A)},X_{(A,A,A,B)},X_{(A,B,A,A)})+H(A,X_{(A,A,B,A)},X_{(A,A,A,B)})+H(A,Z_{4},X_{(A,A,B,A)})\\&+H(A,Z_{3})+H(A,B)\\
			\overset{(c)}{=}& H(A,B,Z_{1},Z_{2},X_{(A,A,B,A)},X_{(A,A,A,B)},X_{(A,B,A,A)})+H(A,X_{(A,A,B,A)},X_{(A,A,A,B)})+H(A,Z_{4},X_{(A,A,B,A)})\\&+H(A,Z_{3})+H(A,B)\\
			\overset{(f)}{=}& 2H(A,B)+H(A,X_{(A,A,B,A)},X_{(A,A,A,B)}) +H(A,Z_{4},X_{(A,A,B,A)})+H(A,Z_{3})\\
			\overset{(b)}{\geq}& 2H(A,B)+H(A,Z_{4},X_{(A,A,B,A)},X_{(A,A,A,B)}) +H(A,X_{(A,A,B,A)})+H(A,Z_{3})\\
			\overset{(c)}{=}& 2H(A,B)+H(A,B,Z_{4},X_{(A,A,B,A)},X_{(A,A,A,B)}) +H(A,X_{(A,A,B,A)})+H(A,Z_{3})\\
			\overset{(f)}{=}& 3H(A,B)+H(A,X_{(A,A,B,A)})+H(A,Z_{3})\\
			\overset{(b)}{\geq}& 3H(A,B)+H(A,Z_{3},X_{(A,A,B,A)})+H(A)\\
			\overset{(c)}{=}& 3H(A,B)+H(A,B,Z_{3},X_{(A,A,B,A)})+H(A)\\
			\overset{(f)}{=}& 4H(A,B)+H(A)\geq9,
			\end{IEEEeqnarray*}
			where
			
			\hspace{-4mm}\begin{tabular}{rl}
			     $(a)$& follows from (\ref{M}) and (\ref{R set}),\\
			     $(b)$& follows from the submodularity property of entropy,\\
			     $(c)$& follows from (\ref{I.1}),\\
			     $(d)$& follows from (\ref{symmetry_file}),\\
			     $(e)$& follows from (\ref{symmetry}),\\
			     $(f)$& follows from (\ref{I.2})
			\end{tabular}

		\end{proof}
The above observations improve upon the previous results from \cite{maddah2014fundamental,yu2016exact} and is summarised in TABLE \ref{Rate achieved for the $(2,4)$ cache network} and Fig. \ref{fig:(2,4)bound}.
	\begin{table}[!ht]
		\centering
		\begin{tabular}{|c|c|c|c|}
			\hline
			Memory&Rate \cite{maddah2014fundamental,yu2016exact}& Lower Bound \cite{maddah2014fundamental,yu2016exact}& New Lower Bound\\
			\hline&&&\\[-1em]
			$1\leq M\leq\dfrac{3}{2}$&$\dfrac{3}{2}-\dfrac{5}{6}M$&$R\geq 1-\dfrac{1}{2}M$&$R\geq\dfrac{3}{2}-\dfrac{5}{6}M$\\[0.6em]
			\hline
		\end{tabular}
	\caption{Rate memory tradeoff for the $(2,4)$ cache network}
	\label{Rate achieved for the $(2,4)$ cache network}
	\end{table}
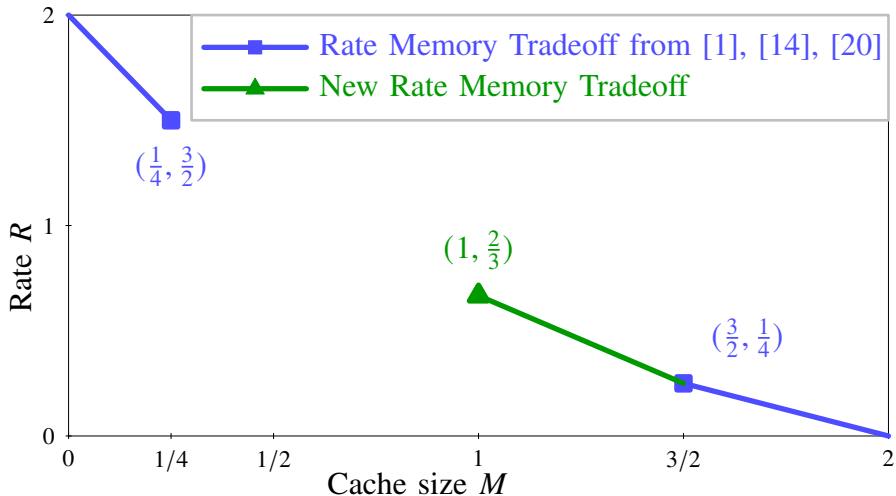
\begin{figure}[ht]
	\centering
	\begin{tikzpicture}[line cap=round,line join=round,x=5.45cm,y=2.8cm,
    spy/.style={%
        draw,green,
        line width=1pt,
        rectangle,inner sep=0pt,
    },
]

    \def\spyviewersize{2.4cm}

    \def\spyonclipreduce{0.8pt}

    \def\spyfactorI{50}
    \coordinate (spy-on 1) at (25/12,1/3);
    \coordinate (spy-in 1) at (2.3,1.5);

    \def\pic{\coordinate (O) at (0,0);
       \draw [ultra thin,step=2,black] (0,0) grid (2,2);

   \foreach \x in {0,1/4,1/2,1,3/2,2}
   \draw[shift={(\x,0)},color=black,thin] (0pt,1pt) -- (0pt,-1pt)
                                   node[below] {\footnotesize $\x$};
      \foreach \y in {0,1,2}
      \draw[shift={(0,\y)},color=black,thin] (1pt,0pt) -- (-1pt,0pt)
                                    node[left] {\footnotesize $\y$};
  \draw[color=black] (6cm,-18pt) node[left] { Cache size $M$};
  \draw[color=black] (-18pt,3.cm) node[left,rotate=90] { Rate $R$};

   \draw[smooth,blue!70,mark=otimes,samples=1000,domain=0.0:2.2,mark = $\otimes$,line width=2pt]
      {(0,2)--(1/4,3/2)  node[mark size=2.5pt,line width=2pt,label={below:$(\frac{1}{4},\frac{3}{2})$}]{$\pgfuseplotmark{square*}$} (3/2,1/4)node[mark size=2.5pt,line width=2pt,label={above right:$(\frac{3}{2},\frac{1}{4})$}]{$\pgfuseplotmark{square*}$}--(2,0)};
            
      \draw[smooth,green!60!black,samples=1000,domain=0.0:2.2,mark size=2pt,mark =otimes*,line width=2pt]
      {(1,2/3)node[mark size=4pt,line width=2pt,label={above:$(1,\frac{2}{3})$}]{$\pgfuseplotmark{triangle*}$}--(3/2,1/4)};

    }

    \pic

\draw [gray!50!white,line width=1pt,fill=white] (0.3,1.5)rectangle (2,2);
\begin{scope}[shift={(0.08,1.65)}] 
		\draw [smooth,samples=1000,domain=0.0:2.2,green!60!black,mark=otimes,line width=2pt] 
	{(0.25,0) --node [mark size=4pt,line width=0.1pt]{$\pgfuseplotmark{triangle*}$} (0.5,0)}
	node[right]{New Rate Memory Tradeoff};

		\draw [yshift=0.75\baselineskip,smooth,blue!70,samples=1000,domain=0.0:2.2,mark=otimes,line width=2pt] 
				{(0.25,0) --node [mark size=2.5pt,line width=0.1pt]{$\pgfuseplotmark{square*}$} (0.5,0)}
				node[right]{Rate Memory Tradeoff from \cite{maddah2014fundamental,yu2016exact,chen2014fundamental}};

	\end{scope}

\end{tikzpicture}
	\caption{Exact rate memory tradeoff for the $(2,4)$ cache network}
	\label{fig:(2,4)bound}
\end{figure}

	\section[Case I]{Case I: $\big\lceil\frac{K+1}{2}\big\rceil \leq N\leq K$}
	In this section we propose a new symmetric caching scheme that achieves the memory rate pair
	\begin{equation}
	(M_{A},R_{A})=\bigg(\frac{N}{K}\Big(K-2+\frac{(K-2+1/N)}{(K-1)}\Big),\frac{1}{K-1}\bigg)
	\end{equation}
	for the $(N, K)$ cache network. This scheme can be seen as a generalization of the scheme presented for $(3,4)$ cache network in Section \ref{example(3,4)} and is an extension of the scheme we proposed in \cite{vijith2019towards,kp2019fundamental}. For $\big\lceil\frac{K+1}{2}\big\rceil\leq N\leq K$, we prove a matching lower bound to establish the exact rate memory tradeoff when $M\geq M_{A}$. Let $\textbf{I}$ denote the indicator function and let  $\textbf{S}_{k}$ denote the set $[K]\setminus\{k\}$. 
	\subsection{Placement and Delivery Phase}
	\noindent During the placement phase, the server splits every file into $2{\text{ } }{}^{K}\!C_{2}$ disjoint subfiles of size $\frac{1}{K(K-1)}F$ bits. Subfiles of the file $W_{n}$ are:
	\begin{equation*}
	\{W^{ij}_{n}: i,j\in [K] \text{ and } i\neq j\}.
	\end{equation*}
	The placement phase proceeds in two stages. In the first stage, the server copy subfiles $W_{n}^{ij}$ in user $U_k$'s cache, $Z_{k}$, if $k\notin \{i,j\}$. In the second stage,  functions of subfiles are computed and placed into each user's cache resulting in the cache of the $k^{th}$ user, $Z_{k}$, having the contents:
\begin{table}[ht]
\centering
\begin{tabular}{|c|c|c|c|}
	\hline &&&\\[-0.6em]
	Stages&Packets& Constraints &$\begin{array}{c}
	\text{Number}
	\end{array}$\\
	\hline
	Stage 1&$W_{n}^{ij}$& $\begin{array}{c}
	n\in [N] ,\\\text{$i,j\in \textbf{S}_{k}$ and $i\ne j$}
	\end{array}$ & $2N\text{ }{}^{K-1}\!C_{2}$ \\
	\hline&&&\\[-1.1em]
	\multirow{2}{*}{Stage 2}&$W_{n}^{k(k+1)}-W_{n}^{kj}$ & $\begin{array}{c}
	n\in [N]\text{ and $j\in \textbf{S}_{k}$}
	\end{array}$  & $N(K-2)$ \\
	\hhline{~---}&&&\\[-1.1em] 
	&$\sum_{n=1}^{N}W_{n}^{k(k+1)}$&  & 1 \\[0.3em]
	\hline 
\end{tabular} 
\end{table}

	\noindent It can be noted that subfiles $W_{n}^{kj}$, for $n\in [N]$ and $j\in \textbf{S}_{k}$, are contained in user $U_k$'s cache in coded form. The total number of packets, each of size  $\frac{1}{K(K-1)}F$ bits, placed in each user's cache is,
	\begin{equation*}
	2N\text{ }{}^{K-1}\!C_{2}+N(K-2)+1=NK(K-2)+1
	\end{equation*}
	utilising the entire cache of size $M_{A}F$ bits.

	In the delivery phase, let the server receive a demand $\textbf{\textit{d}}$. Let $N_{k}^{i}$ represent the number of users in the set $\textbf{S}_{k}$  requesting the file $W_{d_{i}}$. For each $k\in [K]$, the server constructs a packet,
	\begin{equation}
	X_{\textbf{\textit{d}}}^{k}=\sum_{s\in \textbf{S}_{k}}\bigg(\frac{\alpha_{k}^{s}}{N_{k}^{s}}\bigg)W_{d_{s}}^{ks}
	\end{equation}
	where
	\begin{equation}
	\alpha_{k}^{s}=1-2 \mathbf{I}\{W_{d_{k}}=W_{d_{s}}\}.
	\end{equation}
	The set of packets broadcast by the server in response to the demand $\textbf{\textit{d}}$ is,
	\begin{equation}
	X_{\textbf{\textit{d}}}=\{X_{\textbf{\textit{d}}}^{1},\dots,X_{\textbf{\textit{d}}}^{K}\}.
	\end{equation}
	Thus, $K$ packets, each of size $\frac{1}{K(K-1)}F$ bits, are transmitted and the rate corresponding to the demand $\textbf{\textit{d}}$ is,
	\begin{equation}
	R=\frac{1}{K-1}.
	\end{equation}

	\subsection{File Recovery by Users}
	\noindent To understand how the requested files are recovered  by the  users, let us consider user $U_k$ who needs to recover the file $W_{d_{k}}$ from its cache contents $Z_{k}$ and the received packets $X_{\textbf{\textit{d}}}$. Subfiles $W_{d_{k}}^{ij}$, for $i,j\in \textbf{S}_{k}$, are available in $Z_{k}$. To reconstruct the file $W_{d_{k}}$, the user needs to compute subfiles $W_{d_{k}}^{jk}$, and $W_{d_{k}}^{kj}$, for $j\in \textbf{S}_{k}$. The user obtains these subfiles in two stages. In the first stage, the user obtains subfiles $W_{d_{k}}^{jk}$, for $j\in \textbf{S}_{k}$. One of the packet available in $X_{\textbf{\textit{d}}}$ is,
	\begin{equation}
	\label{RC_stage1_RX}
	X_{\textbf{\textit{d}}}^{j}=\bigg(\frac{\alpha_{j}^{k}}{N_{j}^{k}}\bigg)W^{jk}_{d_{k}}+\sum_{s\in \textbf{S}_{j}\setminus\{k\}}\bigg(\frac{\alpha_{j}^{s}}{N_{j}^{s}}\bigg)W^{js}_{d_{s}}.
	\end{equation}
	Since subfiles $W^{js}_{n}$, for $s\in \textbf{S}_{j}\setminus\{k\}=\textbf{S}_{k}\setminus\{j\}$, are available in $Z_{k}$, the user can evaluate
	\begin{equation}
	\label{RC_stage1_ob}
	\sum_{s\in \textbf{S}_{j}\setminus\{k\}}\bigg(\frac{\alpha_{j}^{s}}{N_{j}^{s}}\bigg)W^{js}_{d_{s}}
	\end{equation}
	 The subfile $W_{d_{k}}^{jk}$ can computed from (\ref{RC_stage1_RX}) and (\ref{RC_stage1_ob}). In the second stage, the user recovers subfiles $W_{d_{k}}^{kj}$, for $j\in \textbf{S}_{k}$.  Another packet available in $X_{\textbf{\textit{d}}}$ is,
	\begin{equation}
	\label{RC_stage2_rc}
	X_{\textbf{\textit{d}}}^{k}=\sum_{j\in \textbf{S}_{k}}\bigg(\frac{\alpha_{k}^{j}}{N_{k}^{j}}\bigg) W^{kj}_{d_{j}}
	\end{equation}
	Since $W_{n}^{k(k+1)}-W_{n}^{kj}$, for $j\in \textbf{S}_{k}$, are available in $Z_{k}$, the user can evaluate
	\begin{equation}
	\label{recovery1}
	\sum_{j\in \textbf{S}_{k}}\bigg(\frac{\alpha_{k}^{j}}{N_{k}^{j}}\bigg)(W_{d_{j}}^{k(k+1)}-W_{d_{j}}^{kj})
	\end{equation} 
	Combining (\ref{RC_stage2_rc}) and (\ref{recovery1}) the user can compute
	\begin{equation}
	\sum_{j\in \textbf{S}_{k}}\bigg(\frac{\alpha_{k}^{j}}{N_{k}^{j}}\bigg)W_{d_{j}}^{k(k+1)}
	\end{equation}
	This  can be rewritten as
	\begin{equation}
	\bigg(\sum_{n\in [N]\setminus{d_{k}}}\sum_{j\in \textbf{S}_{k}: d_{j}=n}\bigg(\frac{\alpha_{k}^{j}}{N_{k}^{j}}\bigg)W_{d_{j}}^{k(k+1)}\bigg)+\bigg(\sum_{j\in \textbf{S}_{k}:d_{j}=d_{k}}\bigg(\frac{\alpha_{k}^{j}}{N_{k}^{j}}\bigg) W_{d_{j}}^{k(k+1)}\bigg)
	\end{equation}
	Note that when $W_{d_{j}}\ne W_{d_{k}}$,  $\alpha_{k}^{j}=1$ and when $W_{d_{j}}=W_{d_{k}}$, $\alpha_{k}^{j}=-1$. Recall that $N_{k}^{k}$ denotes the number of users request for file $W_{d_{k}}$ in set $\textbf{S}_{k}$. Now the above expression simplifies to
	\begin{equation}
	\label{recover2}
	\bigg(\sum_{n\in [N]\setminus{d_{k}}}W_{n}^{k(k+1)}\bigg)-I\{N_{k}^{k}\ne 0\} W_{d_{k}}^{k(k+1)}
	\end{equation}
	With the help of the cached function
	\begin{equation}
	\sum_{n\in [N]}W_{n}^{k(k+1)}=\bigg(\sum_{n\in [N]\setminus\{d_{k}\}}W_{n}^{k(k+1)}\bigg)+W_{d_{k}}^{k(k+1)}
	\end{equation} 
	 and (\ref{recover2}) the user can compute the subfile $W_{d_{k}}^{k(k+1)}$. Combining this with $W_{d_{k}}^{k(k+1)}-W_{d_{k}}^{kj}$ available in $Z_{k}$  user can obtain subfiles $W_{d_{k}}^{kj}$. Using all the recovered subfiles the user can reconstruct the requested file $W_{d_{k}}$.

	 The above observations can be summarised as:
	\begin{theorem}
		\label{new rate}
		The memory rate pair $\Big(\frac{N}{K}\big(K-2+\frac{(K-2+1/N)}{(K-1)}\big),\frac{1}{K-1}\Big)$ is achievable by symmetric caching schemes for the $(N,K)$ cache network.
	\end{theorem}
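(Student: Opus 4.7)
The plan is to prove the theorem by exhibiting a symmetric caching scheme and checking three things: (i) the placement respects the cache budget $M_A$, (ii) the delivery transmits at rate $R_A = \frac{1}{K-1}$, and (iii) every user recovers its demanded file for any $\textbf{\textit{d}} \in \textbf{\textit{D}}$. For the placement, I would split each file $W_n$ into $2\binom{K}{2}$ subfiles $W_n^{ij}$, $i \neq j$, of size $\frac{F}{K(K-1)}$ bits each, and fill the cache $Z_k$ in two stages: Stage~1 stores every uncoded subfile $W_n^{ij}$ with $k \notin \{i,j\}$, and Stage~2 stores the coded packets $W_n^{k(k+1)} - W_n^{kj}$ for $n \in [N]$, $j \in \textbf{S}_k$, together with the single sum $\sum_{n=1}^{N} W_n^{k(k+1)}$. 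For delivery, in response to a demand $\textbf{\textit{d}}$ the server transmits the $K$ packets $X_{\textbf{\textit{d}}}^k = \sum_{s \in \textbf{S}_k}(\alpha_k^s/N_k^s)\, W_{d_s}^{ks}$.

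Items (i) and (ii) are routine counting. Adding $2N\binom{K-1}{2} + N(K-2) + 1 = NK(K-2)+1$ packets of size $\frac{F}{K(K-1)}$ gives a per-cache load of $\frac{NK(K-2)+1}{K(K-1)}F$, which rearranges exactly to $M_A F$; and $K$ transmitted packets of the same size yield rate $\frac{1}{K-1}$.

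The substantive work is (iii). For user $U_k$ the subfiles $W_{d_k}^{ij}$ with $i,j \in \textbf{S}_k$ are already in $Z_k$, so only $W_{d_k}^{jk}$ and $W_{d_k}^{kj}$ for $j \in \textbf{S}_k$ need to be recovered. To extract $W_{d_k}^{jk}$ I would inspect $X_{\textbf{\textit{d}}}^j$: every summand other than the $s=k$ term is a subfile $W_{d_s}^{js}$ with $s \in \textbf{S}_j \setminus \{k\} = \textbf{S}_k \setminus \{j\}$, which is uncoded in $Z_k$ by Stage~1, so the user can subtract these off. To extract $W_{d_k}^{kj}$ I would start from $X_{\textbf{\textit{d}}}^k$ and subtract $\sum_{j \in \textbf{S}_k}(\alpha_k^j/N_k^j)(W_{d_j}^{k(k+1)} - W_{d_j}^{kj})$, which the user assembles from the Stage~2 differences; the residue is a linear combination purely in the subfiles $W_{d_j}^{k(k+1)}$.

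The step I expect to be the main obstacle, and the one that justifies the coefficient design, is showing that this residue collapses to a single subfile that can be removed using the cached sum. Grouping the residue by the requested file, for each $n \neq d_k$ the coefficients $\alpha_k^j / N_k^j = 1/N_k^j$ taken over the indices $j \in \textbf{S}_k$ with $d_j = n$ sum to $1$, while the terms with $d_j = d_k$ contribute $-\mathbf{I}\{N_k^k \neq 0\}\, W_{d_k}^{k(k+1)}$ by the sign flip $\alpha_k^j = -1$. The condition $\textbf{\textit{d}} \in \textbf{\textit{D}}$ enters crucially here to guarantee $N_k^j \geq 1$ for every $j$ with $d_j \neq d_k$, so no coefficient $1/N_k^j$ is undefined. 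Comparing the result with the cached $\sum_{n=1}^{N} W_n^{k(k+1)}$ isolates $W_{d_k}^{k(k+1)}$, and a final application of the Stage~2 differences yields $W_{d_k}^{kj}$ for each $j \in \textbf{S}_k$, completing the reconstruction of $W_{d_k}$.
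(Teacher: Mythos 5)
Your proposal reproduces the paper's scheme exactly — the same two-stage placement (uncoded subfiles $W_n^{ij}$ with $k\notin\{i,j\}$, the differences $W_n^{k(k+1)}-W_n^{kj}$, and the single sum $\sum_{n}W_n^{k(k+1)}$), the same delivery packets $X_{\textbf{\textit{d}}}^k$, and the same two-stage recovery culminating in the collapse of the residue to $\sum_{n\neq d_k}W_n^{k(k+1)}-\mathbf{I}\{N_k^k\neq 0\}\,W_{d_k}^{k(k+1)}$ — so it is correct and follows essentially the same route as the paper. One small slip in attribution: the well-definedness of $1/N_k^s$ does not rely on $\textbf{\textit{d}}\in \textbf{\textit{D}}$; since $s\in \textbf{S}_k$ itself requests $W_{d_s}$, one always has $N_k^s\geq 1$ for any demand vector, so the scheme in fact decodes every demand, not only those in $\textbf{\textit{D}}$.
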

	\noindent The caching scheme proposed in \cite{maddah2014fundamental,yu2016exact} achieves the memory rate pair $\big(\frac{N}{K}(K-1),\frac{1}{K}\big)$, and by memory sharing between that scheme and the proposed scheme, we can  achieve all memory rate pairs $\big(M,\frac{(KN-1)}{K(N-1)}-\frac{1}{(N-1)}M\big)$, where $M\in \big[\frac{N}{K}\big(K-2+\frac{(K-2+1/N)}{(K-1)}\big),\frac{N}{K}(K-1)\big]$. By deriving a matching lower bound, we show that this is the exact rate memory tradeoff when $\big\lceil\frac{K+1}{2}\big\rceil\leq N\leq K$.

	\subsection{Matching Lower Bound}

	\noindent Consider the demand
	\begin{equation}
	\textbf{\textit{d}}_{1}=(W_{1},W_{2},\dots, W_{N},W_{1},W_{2},\dots, W_{K-N}).
	\end{equation}
	Demands $\{\textbf{\textit{d}}_{l}:2\leq l\leq K\}$, are obtained from the demand $\textbf{\textit{d}}_{1}$ by cyclic left shifts as shown in TABLE \ref{demands}. Consider the demand $\textbf{\textit{b}}_{l}$ defined as
	\begin{equation*}
	\textbf{\textit{b}}_{l}=\left\{\begin{aligned}
	\textbf{\textit{d}}_{N-l+1}, &\text{ for $1\leq l\leq N$}\\
	\textbf{\textit{d}}_{K+N-l+1}, &\text{ for $N+1\leq l\leq K$}
	\end{aligned}\right.
	\end{equation*}
	\begin{table*}[tp]
		\centering
		\setlength{\tabcolsep}{3pt}
		\begin{tabular}{r|c|c|c|>{\columncolor{gray!30!white}}c|c|c|c|c|>{\columncolor{gray!30!white}}c|c|c|}
			\hhline{~-----------} 
			&Demands &$U_{1}$&\dots&$U_{i}$&\dots&$U_{N}$&$U_{N+1}$&\dots&$U_{N+i}$&\dots&$U_{K}$\\[0.2em]
			\hhline{~-----------} 
			\ldelim\{{8}{11.5mm}[\parbox{4.5mm}{ {${\textbf{\textit{A}}_{i}}$}}]&$ {\textbf{\textit{d}}_{1}}$&$ {W_{1}}$& {\dots}&$ {W_{i}}$& {\dots}&$ {W_{N}}$&$ {W_{1}}$& {\dots}& {$W_{i}$}& {\dots}&$ {W_{K-N}}$\\[0.2em] 
			\hhline{~-----------} 
			&$ {\textbf{\textit{d}}_{2}}$&$ {W_{2}}$& {\dots}& {$W_{i+1}$}& {\dots}&$ {W_{1}}$&$ {W_{2}}$& {\dots}& {$W_{i+1}$}& {\dots}&$ {W_{1}}$\\[0.2em] 
			\hhline{~-----------} 
			& {\dots}& {\dots}& {\dots}& {\dots}& {\dots}& {\dots}& {\dots}& {\dots}& {\dots}& {\dots}& {\dots}\\[0.2em]
			
			\hhline{~-----------}  &$ {\textbf{\textit{d}}_{\substack{K-N-i+1}}}$&$ {W_{\substack{K-N-i+1}}}$& {\dots}& {$W_{K-N}$}& {\dots}&$ {W_{\substack{K-N-i}}}$&$ {W_{\substack{K-N-i+1}}}$& {\dots}& {$W_{K-N}$}& {\dots}&$ {W_{\substack{K-N-i}}}$\\[0.2em] 
			\hhline{~-----------} 
			\ldelim\{{3.5}{5mm}[\parbox{3.5mm}{ {${\textbf{\textit{C}}_{i}}$}}]&$ {\textbf{\textit{d}}_{\substack{K-N-i+2}}}$&$ {W_{\substack{K-N-i+2}}}$& {\dots}& {$W_{K-N+1}$}& {\dots}&$ {W_{\substack{K-N-i+1}}}$&$ {W_{\substack{K-N-i+2}}}$& {\dots}& {$W_{1}$}& {\dots}&$ {W_{\substack{K-N-i-1}}}$\\[0.2em] 
			\hhline{~-----------} 
			
			& {\dots}& {\dots}& {\dots}& {\dots}& {\dots}& {\dots}& {\dots}& {\dots}& {\dots}& {\dots}& {\dots}\\[0.2em]
			\hhline{~-----------}  &$ {\textbf{\textit{d}}_{\substack{N-i}}}$&$ {W_{\substack{N-i}}}$& {\dots}& {$W_{N-1}$}& {\dots}&$ {W_{\substack{2N-K-i-1}}}$&$ {W_{\substack{2N-K-i}}}$& {\dots}& {$W_{2N-K-1}$}& {\dots}&$ {W_{\substack{N-i-1}}}$\\[0.2em]
			\hhline{~-----------} &$\textbf{\textit{d}}_{\substack{N-i+1}}$&$W_{\substack{N-i+1}}$&\dots&\cellcolor{LimeGreen!50!white}{$W_{N}$}&\dots&$W_{\substack{2N-K-i}}$&$W_{\substack{2N-K-i+1}}$&\dots&$W_{2N-K}$&\dots&$W_{\substack{N-i}}$\\[0.2em] \hhline{~-----------} 
			&\dots&\dots&\dots&\dots&\dots&\dots&\dots&\dots&\dots&\dots&\dots\\[0.2em]
			
			\hhline{~-----------}  &$\textbf{\textit{d}}_{\substack{N}}$&$W_{\substack{N}}$&\dots&$W_{i-1}$&\dots&$W_{\substack{2N-K}}$&$W_{\substack{2N-K+1}}$&\dots&$W_{\substack{2N-K+i}}$&\dots&$W_{\substack{N-1}}$\\[0.2em] \hhline{~-----------} 
			\ldelim\{{7}{10.5mm}[\parbox{3.5mm}{ {${\textbf{\textit{J}}}$}}]&$ {\textbf{\textit{d}}_{\substack{N+1}}}$&$ {W_{\substack{1}}}$& {\dots}& {$W_{i}$}& {\dots}&$ {W_{\substack{2N-K+1}}}$&$ {W_{\substack{2N-K+2}}}$& {\dots}& {$W_{\substack{2N-K+i+1}}$}& {\dots}&$ {W_{\substack{N}}}$\\[0.2em] \hhline{~-----------} 
			& {\dots}& {\dots}& {\dots}& {\dots}& {\dots}& {\dots}& {\dots}& {\dots}& {\dots}& {\dots}& {\dots}\\[0.2em]
			
			\hhline{~-----------} &$ {\textbf{\textit{d}}_{\substack{K-i+1}}}$&$ {W_{\substack{K-N-i+1}}}$& {\dots}& {$W_{K-N}$}& {\dots}&$ {W_{\substack{N-i}}}$&$ {W_{\substack{N-i+1}}}$& {\dots}&\cellcolor{LimeGreen!50!white} {$W_{\substack{N}}$}& {\dots}&$ {W_{\substack{K-N-i}}}$\\[0.2em] 
			\hhline{~-----------} 
			\ldelim\{{3.5}{5mm}[\parbox{3.5mm}{ {${\textbf{\textit{B}}_{i}}$}}]&$ {\textbf{\textit{d}}_{\substack{K-i+2}}}$&$ {W_{\substack{K-N-i+2}}}$& {\dots}& {$W_{1}$}& {\dots}&$ {W_{\substack{N-i+1}}}$&$ {W_{\substack{N-i+2}}}$& {\dots}& {$W_{\substack{1}}$}& {\dots}&$ {W_{\substack{K-N-i+1}}}$\\[0.2em] 
			\hhline{~-----------} 
			& {\dots}& {\dots}& {\dots}& {\dots}& {\dots}& {\dots}& {\dots}& {\dots}& {\dots}& {\dots}& {\dots}\\[0.2em]
			
			\hhline{~-----------}  &$ {\textbf{\textit{d}}_{K}}$&$ {W_{\substack{K-N}}}$& {\dots}& {$W_{i-1}$}& {\dots}&$ {W_{\substack{N-1}}}$&$ {W_{\substack{N}}}$& {\dots}& {$W_{\substack{i-1}}$}& {\dots}&$ {W_{\substack{K-N-1}}}$\\[0.2em] 
			\hhline{~-----------}
			
		\end{tabular}
		\caption{The set of demands $\{\textbf{\textit{d}}_{l}: 1\leq l\leq K\}$}
		\label{demands}
	\end{table*}
	\noindent It can be noted that in demand $\textbf{\textit{b}}_{l}$, the user $U_l$ requests for the file $W_{N}$. Let $X_{\textbf{\textit{d}}_{l}}$ denote the set of packets broadcast by the server in response to the demand $\textbf{\textit{d}}_{l}$. For $\textbf{\textit{S}}\subseteq \{\textbf{\textit{d}}_{1},\dots,\textbf{\textit{d}}_{K}\}$, let $X_{\textbf{\textit{S}}}$ denote the set of all packets broadcast in response to the demands in the set ${\textbf{\textit{S}}}$. The following lemma are easy to obtain:

	\begin{lemma}
		\label{re}
		For $\textbf{S}$, $\textbf{\textit{T}}\subset\{\textbf{\textit{d}}_{1},\dots, \textbf{\textit{d}}_{K}\}\setminus\{\textbf{\textit{b}}_{l}\}$, we have the identity,
		\begin{IEEEeqnarray*}{rl}
		H(W_{[N-1]},X_{\textbf{S}\cup\{\textbf{\textit{b}}_{l}\}})+H(W_{[N-1]},Z_{l}&,X_{\textbf{\textit{T}}})\geq H(W_{[N-1]},X_{\textbf{S}\cap \textbf{\textit{T}}})+N
		\end{IEEEeqnarray*}
	\end{lemma}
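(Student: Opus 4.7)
The plan is to read this inequality as a straightforward application of the submodularity of entropy, followed by a single decoding/bound step that is tailored to the key feature of $\textbf{\textit{b}}_{l}$: namely that user $U_{l}$ demands $W_{N}$ in that demand vector.

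First, I would set
\begin{equation*}
A=\bigl(W_{[N-1]},X_{\textbf{\textit{S}}\cup\{\textbf{\textit{b}}_{l}\}}\bigr),\qquad B=\bigl(W_{[N-1]},Z_{l},X_{\textbf{\textit{T}}}\bigr)
\end{equation*}
and apply submodularity $H(A)+H(B)\geq H(A\cap B)+H(A\cup B)$, where by $\cap$ and $\cup$ I mean intersection and union of the underlying sets of random variables. The hypothesis $\textbf{\textit{b}}_{l}\notin\textbf{\textit{T}}$ is what makes the intersection collapse cleanly: since $Z_{l}\notin A$ and $X_{\textbf{\textit{b}}_{l}}\notin B$, we get $A\cap B=\bigl(W_{[N-1]},X_{\textbf{\textit{S}}\cap \textbf{\textit{T}}}\bigr)$, which matches the first term on the right-hand side of the claimed inequality exactly.

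The remaining task is to show $H(A\cup B)=H\bigl(W_{[N-1]},Z_{l},X_{\textbf{\textit{S}}\cup\textbf{\textit{T}}\cup\{\textbf{\textit{b}}_{l}\}}\bigr)\geq N$. This is where the choice of $\textbf{\textit{b}}_{l}$ is used: since user $U_{l}$ demands $W_{N}$ in $\textbf{\textit{b}}_{l}$, property~(\ref{I.1}) lets me adjoin $W_{N}$ to $(Z_{l},X_{\textbf{\textit{b}}_{l}})$ without increasing entropy, and then property~(\ref{I.2}) gives
\begin{equation*}
H\bigl(W_{[N-1]},Z_{l},X_{\textbf{\textit{S}}\cup\textbf{\textit{T}}\cup\{\textbf{\textit{b}}_{l}\}}\bigr)=H\bigl(W_{[N]},Z_{l},X_{\textbf{\textit{S}}\cup\textbf{\textit{T}}\cup\{\textbf{\textit{b}}_{l}\}}\bigr)=H(W_{[N]})=N,
\end{equation*}
using that every $X_{\textbf{\textit{d}}_{i}}$ is a function of $W_{[N]}$ (normalising each file to unit entropy). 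Chaining this lower bound into the submodularity inequality yields the stated identity.

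There is no real obstacle in the argument — it is essentially a one-line submodularity step plus a decodability observation — but one has to be careful that the set-theoretic bookkeeping ($A\cap B$ and $A\cup B$) comes out right, and in particular that the hypothesis $\textbf{\textit{S}},\textbf{\textit{T}}\subset\{\textbf{\textit{d}}_{1},\dots,\textbf{\textit{d}}_{K}\}\setminus\{\textbf{\textit{b}}_{l}\}$ is exactly what is needed to ensure $\textbf{\textit{b}}_{l}\in A\setminus B$. This lemma will presumably serve as the atomic step in a cut-set-style recursion over the $K$ demands $\{\textbf{\textit{d}}_{l}\}$ in the matching lower bound for Case~I, so the clean $+N$ constant is what allows telescoping.
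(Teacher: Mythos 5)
Your proof is correct and matches the paper's argument essentially step for step: a single submodularity step yielding $H(W_{[N-1]},X_{\textbf{\textit{S}}\cap\textbf{\textit{T}}})+H(W_{[N-1]},Z_{l},X_{\textbf{\textit{S}}\cup\textbf{\textit{T}}\cup\{\textbf{\textit{b}}_{l}\}})$, then adjoining $W_{N}$ via the decodability condition (\ref{I.1}) (using that $U_{l}$ demands $W_{N}$ in $\textbf{\textit{b}}_{l}$), and collapsing to $H(W_{[N]})=N$ via (\ref{I.2}). The explicit bookkeeping of $A\cap B$ and $A\cup B$, and the remark that $\textbf{\textit{S}},\textbf{\textit{T}}\subset\{\textbf{\textit{d}}_{1},\dots,\textbf{\textit{d}}_{K}\}\setminus\{\textbf{\textit{b}}_{l}\}$ is what makes the intersection collapse, are exactly the points the paper implicitly relies on.
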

	\begin{proof}
		We have,
		\begin{IEEEeqnarray*}{rl}
		H(W_{[N-1]},X_{\textbf{\textit{S}}\cup\{\textbf{\textit{b}}_{l}\}})+H(W_{[N-1]},Z_{l},X_{\textbf{\textit{T}}})&\overset{(a)}{\geq} H(W_{[N-1]},X_{\textbf{\textit{S}}\cap \textbf{\textit{T}}})+H(W_{[N-1]},Z_{l},X_{\textbf{\textit{S}}\cup \textbf{\textit{T}}\cup\{\textbf{\textit{b}}_{l}\}})\\
		&\overset{(b)}{=}H(W_{[N-1]},X_{\textbf{\textit{S}}\cap \textbf{\textit{T}}})+H(W_{[N-1]},W_{N},Z_{l},X_{\textbf{\textit{S}}\cup \textbf{\textit{T}}\cup\{\textbf{\textit{b}}_{l}\}})\\
		&\overset{(c)}{=}H(W_{[N-1]},X_{\textbf{\textit{S}}\cap \textbf{\textit{T}}})+H(W_{[N]})
		\overset{}{=}H(W_{[N-1]},X_{\textbf{\textit{S}}\cap \textbf{\textit{T}}})+N
		\end{IEEEeqnarray*}
		where
		
		\hspace{-4mm}\begin{tabular}{cl}
			$(a)$& follows from the submodularity property of entropy,\\ $(b)$& follows from (\ref{I.1}), \\$(c)$& follow from (\ref{I.2}).
		\end{tabular}
		
	\end{proof}
	
	\begin{lemma}
		\label{sum increment}
		For a sequence of sets  $\textbf{S}_{i}\subset\{\textbf{\textit{d}}_{1},\dots, \textbf{\textit{d}}_{K}\}\setminus\{\textbf{\textit{b}}_{i}\}$, such that $\textbf{S}_{i}=\textbf{S}_{i+1}\cup \{\textbf{\textit{b}}_{i+1}\}$, we have the identity
		\begin{IEEEeqnarray*}{rl}
		H(W_{[N-1]},X_{\textbf{S}_{l}})+\sum_{i=l+1}^{j}H(W_{[N-1]},Z_{i},X_{\textbf{S}_{i}})\geq H(W_{[N-1]},X_{\textbf{S}_{j}})+(j-l)N
		\end{IEEEeqnarray*}
	\end{lemma}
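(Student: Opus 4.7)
The plan is to chain together $j-l$ applications of Lemma \ref{re} and let the resulting bounds telescope. The hypothesis $\textbf{\textit{S}}_{i+1}\subset\{\textbf{\textit{d}}_{1},\dots,\textbf{\textit{d}}_{K}\}\setminus\{\textbf{\textit{b}}_{i+1}\}$ is precisely the admissibility condition required to invoke Lemma \ref{re} at the $(i+1)$-th step, so the argument amounts to a straightforward induction on $j-l$, with the case $j=l$ holding trivially as both sides equal $H(W_{[N-1]},X_{\textbf{\textit{S}}_{l}})$.

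For the inductive step, for each $i\in\{l,l+1,\dots,j-1\}$ I would apply Lemma \ref{re} with the choice $\textbf{\textit{S}}=\textbf{\textit{T}}=\textbf{\textit{S}}_{i+1}$ and with the parameter ``$l$'' of that lemma taken to be $i+1$. Since by hypothesis $\textbf{\textit{S}}_{i}=\textbf{\textit{S}}_{i+1}\cup\{\textbf{\textit{b}}_{i+1}\}$ and $\textbf{\textit{S}}\cap\textbf{\textit{T}}=\textbf{\textit{S}}_{i+1}$, this produces the one-step bound
\begin{equation*}
H(W_{[N-1]},X_{\textbf{\textit{S}}_{i}})+H(W_{[N-1]},Z_{i+1},X_{\textbf{\textit{S}}_{i+1}})\geq H(W_{[N-1]},X_{\textbf{\textit{S}}_{i+1}})+N.
\end{equation*}

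Adding these $j-l$ inequalities together, each intermediate entropy $H(W_{[N-1]},X_{\textbf{\textit{S}}_{i}})$ with $l<i<j$ appears exactly once on each side and therefore cancels, leaving
\begin{equation*}
H(W_{[N-1]},X_{\textbf{\textit{S}}_{l}})+\sum_{i=l+1}^{j}H(W_{[N-1]},Z_{i},X_{\textbf{\textit{S}}_{i}})\geq H(W_{[N-1]},X_{\textbf{\textit{S}}_{j}})+(j-l)N,
\end{equation*}
which is the desired identity.

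There is no real obstacle beyond bookkeeping: Lemma \ref{re} does essentially all the work. The only points requiring care are (i) verifying that the admissibility hypothesis $\textbf{\textit{b}}_{i+1}\notin\textbf{\textit{S}}_{i+1}$ holds at every step (this is exactly the standing assumption on the sequence $\{\textbf{\textit{S}}_{i}\}$), and (ii) matching the indices so that the ``$\textbf{\textit{S}}\cup\{\textbf{\textit{b}}_{l}\}$'' argument in Lemma \ref{re} lines up with $\textbf{\textit{S}}_{i}$ in the telescoping chain.
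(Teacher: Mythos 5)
Your proof is correct and matches the paper's argument: both establish the one‑step bound $H(W_{[N-1]},X_{\textbf{\textit{S}}_{i}})+H(W_{[N-1]},Z_{i+1},X_{\textbf{\textit{S}}_{i+1}})\geq H(W_{[N-1]},X_{\textbf{\textit{S}}_{i+1}})+N$ via Lemma~\ref{re} with $\textbf{\textit{S}}=\textbf{\textit{T}}=\textbf{\textit{S}}_{i+1}$, and then chain it $j-l$ times. The only difference is cosmetic — you add the inequalities and let them telescope, whereas the paper unrolls them sequentially — so this is the same proof.
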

	\begin{proof}
		We have,
		\begin{IEEEeqnarray*}{rl}
		H(W_{[N-1]}&,X_{\textbf{\textit{S}}_{l}})+\sum_{i=l+1}^{j}H(W_{[N-1]},Z_{i},X_{\textbf{\textit{S}}_{i}})=H(W_{[N-1]},X_{\textbf{\textit{S}}_{l}})+H(W_{[N-1]},Z_{l+1},X_{\textbf{\textit{S}}_{l+1}})\\&\text{ }+\sum_{i=l+2}^{j}H(W_{[N-1]},Z_{i},X_{\textbf{\textit{S}}_{i}})\\
		&=\Big(H(W_{[N-1]},X_{\textbf{\textit{S}}_{l+1}},X_{\textbf{\textit{b}}_{{l+1}}})+H(W_{[N-1]},Z_{l+1},X_{\textbf{\textit{S}}_{l+1}})\Big)+\sum_{i=l+2}^{j}H(W_{[N-1]},Z_{i},X_{\textbf{\textit{S}}_{i}})\\
		&\overset{(a)}{\geq}H(W_{[N]})+\Big(H(W_{[N-1]},X_{\textbf{\textit{S}}_{l+1}})+H(W_{[N-1]},Z_{l+2},X_{\textbf{\textit{S}}_{l+2}})\Big)+\sum_{i=l+3}^{j}H(W_{[N-1]},Z_{i},X_{\textbf{\textit{S}}_{i}})\\
		&\overset{(a)}{\geq}2H(W_{[N]})+H(W_{[N-1]},X_{\textbf{\textit{S}}_{l+2}})+H(W_{[N-1]},Z_{l+3},X_{\textbf{\textit{S}}_{l+3}})+\sum_{i=l+4}^{j}H(W_{[N-1]},Z_{i},X_{\textbf{\textit{S}}_{i}})\\
		&\overset{(b)}{\geq}(j-l)H(W_{[N]})+H(W_{[N-1]},X_{\textbf{\textit{S}}_{j}})
		\overset{}{=}H(W_{[N-1]},X_{\textbf{\textit{S}}_{j}})+(j-l)N
		\end{IEEEeqnarray*}
		where
		
		\hspace{-4mm}\begin{tabular}{cl}
			$(a)$& follows from Lemma \ref{re} with  $\textbf{\textit{S}}=\textbf{\textit{T}}=\textbf{\textit{S}}_{l+1}$,\\
			$(b)$& follows from repeated use of Lemma \ref{re} with $\textbf{\textit{S}}\cup\{\textbf{\textit{b}}_{l}\}=\textbf{\textit{S}}_{i}$ and $\textbf{\textit{T}}=\textbf{\textit{S}}_{i+1}$ for $l+3\leq i\leq j$.
		\end{tabular}
	
	\end{proof}
	\noindent In a similar fashion, for a sequence of sets  $\textbf{\textit{T}}_{i}\subset\{\textbf{\textit{d}}_{1},\dots, \textbf{\textit{d}}_{K}\}\setminus\{\textbf{\textit{b}}_{i}\}$, such that $\textbf{\textit{T}}_{i}=\textbf{\textit{T}}_{i-1}\cup \{\textbf{\textit{b}}_{i-1}\}$, we can obtain
	\begin{equation}
	\label{sum decrement}
	H(W_{[N-1]},X_{\textbf{\textit{T}}_{j+1}})+\sum_{i=l}^{j}H(W_{[N-1]},Z_{i},X_{\textbf{\textit{T}}_{i}})\geq H(W_{[N-1]},X_{\textbf{\textit{T}}_{l}})+(j-l+1)N
	\end{equation}
	For $1\leq i\leq N$, let us consider the sets of demands as shown below:
	\begin{table}[h!]
		\centering
		\begin{tabular}{|c|c|c|c|}
			\hline
			Set &Demands &Number& {Files Requested by  $U_i$}\\
			\hline
			 {${\textbf{\textit{A}}_{i}}$}& {${\textbf{\textit{d}}}_{1},\dots, \textbf{\textit{d}}_{N-i}$}& $N-i$& {$W_{i},\dots, W_{(N-1)}$}\\
			\hline
			 {${\textbf{\textit{B}}_{i}}$}& {$\textbf{\textit{d}}_{K-i+2},\dots, \textbf{\textit{d}}_{K}$}&$i-1$& {$W_{1},\dots, W_{i-1}$}\\
			\hline
			 {${\textbf{\textit{C}}_{i}}$}& {$\textbf{\textit{d}}_{K-N-i+2},\dots, \textbf{\textit{d}}_{N-i}$}&$2N-K-1$& {$W_{(K-N+1)},\dots,W_{N-1}$}\\
			\hline
			 {${\textbf{\textit{J}}}$}& {$\textbf{\textit{d}}_{N+1},\dots, \textbf{\textit{d}}_{K}$}&$K-N$& {$W_{1},\dots, W_{K-N}$}\\
			\hline
		\end{tabular}
	\end{table}
	
	 \noindent These set are also indicated in TABLE \ref{demands}. Note that
	\begin{IEEEeqnarray}{rl}
	\label{case 1 null set}
	{\textbf{\textit{A}}_{N}}={\textbf{\textit{B}}_{1}}={\textbf{\textit{C}}_{N}}&=\phi\\
	\label{case1 Set A}
	{\textbf{\textit{A}}_{i+1}}\cup \{\textbf{\textit{b}}_{i+1}\}& ={\textbf{\textit{A}}_{i}}\\
	\label{case1 Set B}
	{\textbf{\textit{B}}_{i}}\cup\{\textbf{\textit{b}}_{N+i}\}&={\textbf{\textit{B}}_{i+1}}\\
	\label{case 1 set AC}
	{\textbf{\textit{A}}_{i}}\cap {\textbf{\textit{C}}_{i}}&={\textbf{\textit{C}}_{i}}\\
	\label{case 1 DB K-N}
	\textbf{\textit{B}}_{K-N}\cup \{\textbf{\textit{b}}_{K}\}=\textbf{\textit{B}}_{K-N+1}&=\textbf{\textit{J}}
	\end{IEEEeqnarray}
	\begin{equation}
	\label{case1 Set BD}
	{\textbf{\textit{B}}_{i}}\cap{\textbf{\textit{J}}}=\left\{ \text{ }\begin{aligned}
	{\textbf{\textit{B}}_{i}} &\text{ when $1\leq i\leq K-N$}\\
	{\textbf{\textit{J}}}&\text{ when $K-N+1\leq i\leq N$}
	\end{aligned}\right.
	\end{equation}

	\noindent It can be noted that in the demand set ${\textbf{\textit{B}}_{i}}$, both  users $U_i$ and $U_{N+i}$ are requesting for the same set of files (for $1\leq i\leq K-N$). Note that $\mid \textbf{\textit{A}}_{i}\cup \textbf{\textit{B}}_{i}\mid = \mid \textbf{\textit{J}}\cup \textbf{\textit{C}}_{i}\mid=N-1$. Thus, we have
	\begin{align}
	\label{case1:M,A,B}
	M+(N-1)R\geq H(Z_{i})+H(X_{\textbf{\textit{A}}_{i}\cup \textbf{\textit{B}}_{i}})\geq H(Z_{i},X_{\textbf{\textit{A}}_{i}\cup \textbf{\textit{B}}_{i}})
	\end{align}
	Similarly,
	\begin{align}
	\label{case1:J,C}
	M+(N-1)R\geq H(Z_{i})+H(X_{\textbf{\textit{J}}\cup \textbf{\textit{C}}_{i}})\geq H(Z_{i},X_{\textbf{\textit{J}}\cup \textbf{\textit{C}}_{i}})
	\end{align}
	Now we can obtain the following result:
	\begin{theorem}
		\label{Main_Result_1}
		For the $(N, K)$ cache network, when $\big\lceil\frac{K+1}{2}\big\rceil\leq N\leq K$, achievable memory rate pairs $(M,R)$ must satisfy the constraint
		\begin{equation*}
		KM+K(N-1)R\geq KN-1.
		\end{equation*}
	\end{theorem}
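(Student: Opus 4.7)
The plan is to bound $KM+K(N-1)R$ from below by a weighted sum of joint entropies of the form $H(W_{[N-1]},Z_l,X_{\textbf{\textit{S}}})$ and then collapse that sum by repeated application of Lemma \ref{re}, so that each collapsing step releases a contribution of $H(W_{[N]})=N$.

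First, I would take $N$ copies of (\ref{case1:M,A,B}), one for each $i\in[N]$, together with $K-N$ copies of (\ref{case1:J,C}), one for each $i\in[K-N]$. Since $|\textbf{\textit{A}}_i\cup\textbf{\textit{B}}_i|=|\textbf{\textit{J}}\cup\textbf{\textit{C}}_i|=N-1$, the total coefficient of $M$ is exactly $N+(K-N)=K$ and that of $R$ is exactly $N(N-1)+(K-N)(N-1)=K(N-1)$, while the right-hand side becomes a sum of $K$ joint entropies $H(Z_{l_k},X_{\textbf{\textit{S}}_k})$ with each $|\textbf{\textit{S}}_k|=N-1$.

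Second, an inspection of TABLE \ref{demands} shows that across the $N-1$ demands in $\textbf{\textit{A}}_i\cup\textbf{\textit{B}}_i$ (resp.\ $\textbf{\textit{J}}\cup\textbf{\textit{C}}_i$), user $U_i$ requests each of the files $W_1,\ldots,W_{N-1}$ exactly once. Iterated use of (\ref{I.1}) therefore yields $H(Z_{l_k},X_{\textbf{\textit{S}}_k})=H(W_{[N-1]},Z_{l_k},X_{\textbf{\textit{S}}_k})$, so that $W_{[N-1]}$ may be freely prepended to every term. Then the identities (\ref{case 1 null set})--(\ref{case1 Set BD}) arrange the $K$ sets $\{\textbf{\textit{A}}_i\cup\textbf{\textit{B}}_i\}$ and $\{\textbf{\textit{J}}\cup\textbf{\textit{C}}_i\}$ into chains under inclusion in which consecutive sets differ by exactly one $\textbf{\textit{b}}_l$ demand --- which is precisely the hypothesis of Lemma \ref{sum increment} along the shrinking portions (e.g.\ $\textbf{\textit{A}}_1\supset\textbf{\textit{A}}_2\supset\cdots\supset\textbf{\textit{A}}_N=\emptyset$) and of (\ref{sum decrement}) along the growing portions (e.g.\ $\textbf{\textit{B}}_1=\emptyset\subset\textbf{\textit{B}}_2\subset\cdots\subset\textbf{\textit{B}}_{K-N+1}=\textbf{\textit{J}}$). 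Telescoping along these chains then executes $K-1$ collapses, each releasing $N$, and leaves a single residual term $H(W_{[N-1]})=N-1$. Adding these contributions gives $KM+K(N-1)R\ge(K-1)N+(N-1)=KN-1$.

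The principal difficulty is in the third step: the $K$ joint entropies assembled in the first step carry several different user indices and involve two different families of sets, so stitching them together into a single coherent telescoping schedule in which each consecutive link supplies the correct $\textbf{\textit{b}}_l$ demand and the correct user index $l_k$ required by Lemma \ref{re} is where the bookkeeping becomes delicate. The cyclic construction of the demands $\{\textbf{\textit{d}}_l\}$ in TABLE \ref{demands} is tailored exactly so that the $\textbf{\textit{A}}_i$, $\textbf{\textit{B}}_i$, $\textbf{\textit{C}}_i$ and $\textbf{\textit{J}}$ families thread together consistently; verifying this at the boundary between the two ranges $i\in[K-N]$ and $i\in[K-N+1,N]$ is where the hypothesis $N\ge\lceil(K+1)/2\rceil$ enters the argument, since only then are the $C_i$ sets large enough (or degenerately empty in just the right way) for the chain to close up.
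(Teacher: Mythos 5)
Your first two steps reproduce the paper's opening moves exactly: $N$ copies of (\ref{case1:M,A,B}) plus $K-N$ copies of (\ref{case1:J,C}), giving $K$ terms $H(Z_i,X_{\cdot})$ with $|\cdot|=N-1$, after which iterated use of (\ref{I.1}) lets you prepend $W_{[N-1]}$. The intended endpoint --- telescoping via Lemma \ref{sum increment} and (\ref{sum decrement}) along the $\textbf{\textit{A}}_i$ and $\textbf{\textit{B}}_i$ chains, releasing $N$ per collapse and leaving a residual $H(W_{[N-1]})\geq N-1$ --- is also the paper's. But your third step as stated contains two genuine gaps that the paper has to work to close.

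First, your claim that ``the $K$ sets $\{\textbf{\textit{A}}_i\cup\textbf{\textit{B}}_i\}$ and $\{\textbf{\textit{J}}\cup\textbf{\textit{C}}_i\}$ [arrange] into chains under inclusion in which consecutive sets differ by exactly one $\textbf{\textit{b}}_l$ demand'' is false as written: $\textbf{\textit{A}}_{i+1}$ loses one demand relative to $\textbf{\textit{A}}_i$ \emph{while} $\textbf{\textit{B}}_{i+1}$ gains one relative to $\textbf{\textit{B}}_i$, so $\textbf{\textit{A}}_i\cup\textbf{\textit{B}}_i$ and $\textbf{\textit{A}}_{i+1}\cup\textbf{\textit{B}}_{i+1}$ are incomparable, not nested. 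The missing step is a submodularity swap (step (c) in the paper's proof): for $i\in[K-N]$, using $\textbf{\textit{A}}_i\cap\textbf{\textit{C}}_i=\textbf{\textit{C}}_i$ and $\textbf{\textit{J}}\cap\textbf{\textit{B}}_i=\textbf{\textit{B}}_i$ from (\ref{case 1 set AC}) and (\ref{case1 Set BD}),
\begin{equation*}
H(W_{[N-1]},Z_i,X_{\textbf{\textit{A}}_i\cup\textbf{\textit{B}}_i})+H(W_{[N-1]},Z_i,X_{\textbf{\textit{J}}\cup\textbf{\textit{C}}_i})\geq H(W_{[N-1]},Z_i,X_{\textbf{\textit{A}}_i\cup\textbf{\textit{J}}})+H(W_{[N-1]},Z_i,X_{\textbf{\textit{B}}_i\cup\textbf{\textit{C}}_i}),
\end{equation*}
and only after this exchange do the $\textbf{\textit{A}}_i\cup\textbf{\textit{J}}$ terms form a decreasing chain fitting Lemma \ref{sum increment} (one then simply drops $\textbf{\textit{C}}_i$ from the other family). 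This swap is also precisely where the hypothesis $N\geq\lceil(K+1)/2\rceil$ bites, since it guarantees $|\textbf{\textit{C}}_i|=2N-K-1\geq 0$, so you are right that the hypothesis enters here, but the mechanism is the swap, not ``chains under inclusion.''

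Second, you never mention the symmetry substitution $Z_i\to Z_{N+i}$ (step (g)). The demand that moves you from $\textbf{\textit{B}}_i$ to $\textbf{\textit{B}}_{i+1}$ is $\textbf{\textit{b}}_{N+i}$, in which it is $U_{N+i}$ --- not $U_i$ --- that requests $W_N$, so the hypothesis of Lemma \ref{re} (and hence of (\ref{sum decrement})) requires the cache $Z_{N+i}$. The paper justifies the replacement $H(W_{[N-1]},Z_i,X_{\textbf{\textit{B}}_i})=H(W_{[N-1]},Z_{N+i},X_{\textbf{\textit{B}}_i})$ by (\ref{symmetry}), observing that $U_i$ and $U_{N+i}$ request identical files across all demands in $\textbf{\textit{B}}_i$. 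Without this substitution the $\textbf{\textit{B}}$-chain telescope does not close, so this is not mere bookkeeping but a load-bearing step you need to supply.
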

	
	\begin{proof}
		We have,
		\vspace{-3mm}
		\begin{IEEEeqnarray*}{rl}
		KM+&K(N-1)R=N(M+(N-1)R)+(K-N)(M+(N-1)R)\\
		\overset{(a)}{\geq}& \sum_{i=1}^{N}H(Z_{i},X_{\textbf{\textit{A}}_{i}\cup\textbf{\textit{B}}_{i}})+\sum_{i=1}^{K-N}H(Z_{i},X_{\textbf{\textit{J}}\cup\textbf{\textit{C}}_{i}})\\
		\overset{(b)}{=}& \sum_{i=1}^{N}H(W_{[N-1]},Z_{i},X_{\textbf{\textit{A}}_{i}\cup\textbf{\textit{B}}_{i}})+\sum_{i=1}^{K-N}H(W_{[N-1]},Z_{i},X_{\textbf{\textit{J}}\cup\textbf{\textit{C}}_{i}})\\
		\overset{}{=}& \sum_{i=1}^{K-N}\Big(H(W_{[N-1]},Z_{i},X_{\textbf{\textit{A}}_{i}\cup\textbf{\textit{B}}_{i}})+H(W_{[N-1]},Z_{i},X_{\textbf{\textit{J}}\cup\textbf{\textit{C}}_{i}})\Big)+\sum_{j=K-N+1}^{N}H(W_{[N-1]},Z_{j},X_{\textbf{\textit{A}}_{j}\cup\textbf{\textit{B}}_{j}})\\
		\overset{(c)}{\geq}& \sum_{i=1}^{K-N}\Big(H(W_{[N-1]},Z_{i},X_{\textbf{\textit{A}}_{i}\cup\textbf{\textit{J}}})+H(W_{[N-1]},Z_{i},X_{\textbf{\textit{B}}_{i}\cup\textbf{\textit{C}}_{i}})\Big)+\sum_{j=K-N+1}^{N}H(W_{[N-1]},Z_{j},X_{\textbf{\textit{B}}_{j}\cup\textbf{\textit{A}}_{j}})\\
		\overset{}{\geq}& \left(H(W_{[N-1]},X_{\textbf{\textit{A}}_{1}\cup\textbf{\textit{J}}})+\sum_{i=2}^{K-N}H(W_{[N-1]},Z_{i},X_{\textbf{\textit{A}}_{i}\cup\textbf{\textit{J}}})\right)+\sum_{j=K-N+1}^{N}H(W_{[N-1]},Z_{j},X_{\textbf{\textit{B}}_{j}\cup\textbf{\textit{A}}_{j}})\\&+\sum_{i=1}^{K-N}H(W_{[N-1]},Z_{i},X_{\textbf{\textit{B}}_{i}\cup\textbf{\textit{C}}_{i}})\\
		\overset{(d)}{\geq}& (K-N-1)N+\left(H(W_{[N-1]},X_{\textbf{\textit{A}}_{K-N}\cup\textbf{\textit{J}}})+\sum_{j=K-N+1}^{N}H(W_{[N-1]},Z_{j},X_{\textbf{\textit{B}}_{j}\cup\textbf{\textit{A}}_{j}})\right)\\&+\sum_{i=1}^{K-N}H(W_{[N-1]},Z_{i},X_{\textbf{\textit{B}}_{i}\cup\textbf{\textit{C}}_{i}})\\
		\overset{(e)}{\geq}& (N-1)N+H(W_{[N-1]},X_{\textbf{\textit{A}}_{N}\cup\textbf{\textit{J}}})+\sum_{i=1}^{K-N}H(W_{[N-1]},Z_{i},X_{\textbf{\textit{B}}_{i}\cup\textbf{\textit{C}}_{i}})\\
		\overset{}{\geq}& (N-1)N+H(W_{[N-1]},X_{\textbf{\textit{A}}_{N}\cup\textbf{\textit{J}}})+\sum_{i=1}^{K-N}H(W_{[N-1]},Z_{i},X_{\textbf{\textit{B}}_{i}})\\
		\overset{(f)}{=}& (N-1)N+H(W_{[N-1]},X_{\textbf{\textit{J}}})+\sum_{i=1}^{K-N}H(W_{[N-1]},Z_{i},X_{\textbf{\textit{B}}_{i}})\\
		\overset{(g)}{=}& (N-1)N+H(W_{[N-1]},X_{\textbf{\textit{J}}}) +\sum_{i=1}^{K-N}H(W_{[N-1]},Z_{i+N},X_{\textbf{\textit{B}}_{i}})\\
		\overset{(h)}{=}& (N-1)N+\left(H(W_{[N-1]},X_{\textbf{\textit{B}}_{K-N+1}}) +\sum_{i=1}^{K-N}H(W_{[N-1]},Z_{i+N},X_{\textbf{\textit{B}}_{i}})\right)\\
		\overset{(i)}{\geq}& (N-1)N+(K-N)N+H(W_{[N-1]},X_{\textbf{\textit{B}}_{1}})\\
		\overset{(f)}{=}& (K-1)N+H(W_{[N-1]})
		\overset{}{\geq} KN-1
		\end{IEEEeqnarray*}
		\noindent where
		
		\hspace{-6mm}	\begin{tabular}{rl}
				$(a)$& follows from (\ref{case1:M,A,B}) and (\ref{case1:J,C}),
			\end{tabular}\\
		\hspace{-6mm}	\begin{tabular}{rl}
				$(b)$& follows from (\ref{I.1}) and the definition of sets $\textbf{\textit{A}}_{i}$, $\textbf{\textit{B}}_{i}$, $\textbf{\textit{C}}_{i}$ and $\textbf{\textit{J}}$,\\
				$(c)$& follows from the facts that $\textbf{\textit{A}}_{i}\cap \textbf{\textit{C}}_{i}=\textbf{\textit{C}}_{i}$, $\textbf{\textit{J}}\cap \textbf{\textit{B}}_{i}=\textbf{\textit{B}}_{i}$ for $1\leq i\leq K-N$ \\&(refer (\ref{case 1 set AC}) and (\ref{case1 Set BD})) and the submodularity property of entropy,\\
				$(d)$& follows from Lemma \ref{sum increment} with $\textbf{\textit{S}}_{i}=\textbf{\textit{A}}_{i}\cup \textbf{\textit{J}}$, $l=1$, $j=K-N$ and (\ref{case1 Set A}),\\
				$(e)$& follows from Lemma \ref{sum increment}, with $\textbf{\textit{S}}_{i}=\textbf{\textit{A}}_{i}\cup \textbf{\textit{J}}$, $l=K-N$, $j=N$,  the fact that $\textbf{\textit{J}}\cap \textbf{\textit{B}}_{i}=\textbf{\textit{J}}$\\& for $K-N+1\leq i\leq K$(refer (\ref{case1 Set BD})) and (\ref{case1 Set A}),\\
				$(f)$& follows from (\ref{case 1 null set}),\\
				$(g)$& follows from  (\ref{symmetry}),\\
				$(h)$& follows from (\ref{case 1 DB K-N}),\\
				$(i)$& follows from (\ref{sum decrement}) with $\textbf{\textit{T}}_{i}=\textbf{\textit{B}}_{i}$, $l=1$, $j=K-N$ and (\ref{case1 Set B}).
			\end{tabular}
	
\end{proof}
	The above observations improve upon the previous results from \cite{maddah2014fundamental,yu2016exact} as shown in TABLE \ref{Table:Case1}. 
	\begin{table}[ht]
		\centering
		\setlength{\tabcolsep}{1.8pt}
		\begin{tabular}{|c|c|c|c|c|}
			\hline
			Memory &Rate \cite{maddah2014fundamental,yu2016exact}&Lower Bound \cite{maddah2014fundamental,yu2016exact}&New Rate& New Lower Bound\\
			\hline&&&&\\[-1.2em]
			$\frac{N}{K}\big(K-2+\frac{(K-2+1/N)}{(K-1)}\big)\leq M \leq \frac{N(K-1)}{K}$&$
			\frac{(K^{2}+K-2)}{K(K-1)}-\frac{(K+1)M}{N(K-1)}$&$R\geq1-\frac{M}{N}$&$\frac{(KN-1)}{K(N-1)}-\frac{M}{(N-1)}$&$R\geq \frac{(KN-1)}{K(N-1)}-\frac{M}{(N-1)}$\\[0.4em]
			\hline
		\end{tabular}
		\caption{Rate memory tradeoff when $\big\lceil \frac{K+1}{2} \big\rceil\leq N \leq K$}
		\label{Table:Case1}
	\end{table}

	\noindent We summarise as:
	\begin{theorem}
		For the $(N,K)$ cache network, when $\big\lceil\frac{K+1}{2}\big\rceil\leq N\leq K$,  
		the exact rate memory tradeoff is given by
		\begin{equation}
		R^{*}(M)=\frac{(KN-1)}{K(N-1)}-\frac{1}{(N-1)}M
		\end{equation}
		where $M\geq \frac{N}{K}\big(K-2+\frac{(K-2+1/N)}{(K-1)}\big)$.
	\end{theorem}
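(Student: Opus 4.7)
The plan is to combine the achievability result of Theorem \ref{new rate} with the converse of Theorem \ref{Main_Result_1}, using memory sharing to fill out the line segment between the two relevant corner points.

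First, for the upper bound (achievability) I would argue as follows. Theorem \ref{new rate} already establishes that the corner point $\bigl(\tfrac{N}{K}(K-2+\tfrac{K-2+1/N}{K-1}),\tfrac{1}{K-1}\bigr)$ is achievable. The Maddah-Ali--Niesen / Yu et al.\ scheme \cite{maddah2014fundamental,yu2016exact} achieves the corner point $\bigl(\tfrac{N}{K}(K-1),\tfrac{1}{K}\bigr)$. Since the set of achievable memory rate pairs is closed under memory sharing (the placement phase can be split into two parts, with a fraction $\lambda$ of each file handled by the first scheme and the remaining fraction $1-\lambda$ by the second scheme), every pair on the line connecting the two corner points is achievable. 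A direct computation of the line through these two points yields the rate formula $R = \tfrac{KN-1}{K(N-1)}-\tfrac{1}{N-1}M$ for $M$ in the stated interval. For $M\geq \tfrac{N}{K}(K-1)$, the Maddah-Ali--Niesen characterization from row 3 of TABLE~I takes over and gives the same linear expression (after checking that $\tfrac{KN-1}{K(N-1)}-\tfrac{M}{N-1}$ and $1-\tfrac{M}{N}$ agree at $M=\tfrac{N}{K}(K-1)$ and both hit $0$ at $M=N$, which they do).

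For the lower bound (converse), I would invoke Theorem \ref{Main_Result_1} directly: any achievable $(M,R)$ with $\lceil (K+1)/2\rceil\le N\le K$ satisfies $KM+K(N-1)R\geq KN-1$, i.e.
\begin{equation*}
R\geq \frac{KN-1}{K(N-1)}-\frac{1}{N-1}M.
\end{equation*}
This matches the achievable rate on the entire range $M\geq \tfrac{N}{K}(K-2+\tfrac{K-2+1/N}{K-1})$, so equality must hold and the theorem follows.

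The substantive work behind this proof has already been done: the scheme construction together with the file-recovery argument that yields Theorem \ref{new rate}, and the entropy inequality chain (using submodularity, the symmetry relations (\ref{symmetry})--(\ref{symmetry_file}), and Lemmas \ref{re}--\ref{sum increment}) that yields Theorem \ref{Main_Result_1}. The only steps one must verify explicitly in the final theorem are (i) that the two corner-point rates lie on the claimed line and (ii) that memory sharing is admissible across the interval, both of which are routine. The only mild subtlety is checking that the hypothesis $\lceil (K+1)/2\rceil\le N\le K$, needed for the converse, is consistent with the interval of $M$ under consideration; this is immediate since the lower bound of Theorem \ref{Main_Result_1} is stated for exactly this range of $N$ and imposes no additional constraint on $M$.
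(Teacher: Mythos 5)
Your overall structure—combine Theorem~\ref{new rate} with Theorem~\ref{Main_Result_1} via memory sharing—is exactly what the paper does, and your verification that both corner points lie on the line $R=\frac{KN-1}{K(N-1)}-\frac{M}{N-1}$ is correct. However, the parenthetical check in your achievability paragraph is wrong: the two expressions $\frac{KN-1}{K(N-1)}-\frac{M}{N-1}$ and $1-\frac{M}{N}$ are \emph{not} the same line. They have different slopes ($-\frac{1}{N-1}$ versus $-\frac{1}{N}$), and at $M=N$ the first evaluates to $-\frac{1}{K(N-1)}$, not $0$. They coincide only at the single point $M=\frac{N(K-1)}{K}$, and for $M>\frac{N(K-1)}{K}$ the Maddah-Ali--Niesen bound $1-\frac{M}{N}$ is strictly larger (and is the true optimum), while the theorem's formula would eventually go negative. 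So the formula in the theorem does not ``take over'' beyond $M=\frac{N(K-1)}{K}$; the valid range is the bounded interval $\bigl[\frac{N}{K}\bigl(K-2+\frac{K-2+1/N}{K-1}\bigr),\,\frac{N(K-1)}{K}\bigr]$, as given in Table~I. Your memory-sharing and converse steps are sound on that interval, which is all that is needed, so this is a small but concrete arithmetic slip in an otherwise faithful reconstruction of the paper's argument; you should simply delete the erroneous claim about agreement at $M=N$ and cap the range of $M$ at $\frac{N(K-1)}{K}$.
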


	\section[Case II:]{Case II: $1\leq N\leq \lceil\frac{K+1}{2}\rceil$}
	
	  The caching scheme proposed by Yu et al. in \cite{yu2016exact} can achieve all memory rate pairs 
	
	  \begin{equation}
	  \left(M,\frac{K^{2}+K-2}{K(K-1)}-\frac{(K+1)}{N(K-1)}M\right),
	  \end{equation}
	  where $M\in \left[\frac{N(K-2)}{K},\frac{N(K-1)}{K}\right]$, for the $(N,K)$ cache network. By deriving a matching lower bound, we show that this is the exact rate memory tradeoff when $1\leq N\leq \big\lceil\frac{K+1}{2}\big\rceil$. 
	 Consider the demand
	 \begin{equation}
	 \textbf{\textit{d}}_{1}=(W_{1},W_{2},\dots, W_{N},W_{1},W_{2},\dots, W_{N-1}, W_{1},W_{1},\dots,W_{1})
	 \end{equation}
	 Demands $\{\textbf{\textit{d}}_{l}:2\leq l\leq K\}$, are obtained from the demand $\textbf{\textit{d}}_{1}$ by  cyclic left shifts as shown in TABLE \ref{demands case 2}.
	\begin{table*}[tp]
		\centering
		\setlength{\tabcolsep}{2.3pt}
		\begin{tabular}{r|c|c|c|>{\columncolor{gray!30!white}}c|c|c|c|c|>{\columncolor{gray!30!white}}c|c|c|c|c|>{\columncolor{gray!30!white}}c|c|c|}
			\hhline{~----------------}
			&Demand&$U_{1}$&\dots&$U_{i}$&\dots&$U_{N}$&$U_{N+1}$&\dots&$U_{N+i}$&\dots&$U_{2N-1}$&$U_{2N}$&\dots&$U_{j}$&\dots&$U_{K}$\\
			\hhline{~----------------}
			\ldelim\{{4}{5mm}[\parbox{3.5mm}{${\textbf{\textit{A}}_{i}}$}]&$\textbf{\textit{d}}_{1}$&$W_{1}$&\dots&$W_{i}$&\dots&$W_{N}$&$W_{1}$&\dots&$W_{i}$&\dots&$W_{N-1}$&$W_{1}$&\dots&$W_{1}$&\dots&$W_{1}$\\
			\hhline{~----------------}
			&$\textbf{\textit{d}}_{2}$&$W_{2}$&\dots&$W_{i+1}$&\dots&$W_{1}$&$W_{2}$&\dots&$W_{i+1}$&\dots&$W_{1}$&$W_{1}$&\dots&$W_{1}$&\dots&$W_{1}$\\
			\hhline{~----------------}
			&\dots&\dots&\dots&\dots&\dots&\dots&\dots&\dots&\dots&\dots&\dots&\dots&\dots&\dots&\dots&\dots\\
			\hhline{~----------------}
			&$\textbf{\textit{d}}_{N-i}$&$W_{N-i}$&\dots&$W_{N-1}$&\dots&$W_{\substack{N-i-1}}$&$W_{N-i}$&\dots&$W_{N-1}$&\dots&$W_{1}$&$W_{1}$&\dots&$W_{1}$&\dots&$W_{i-1}$\\
			\hhline{~----------------}
			&$\textbf{\textit{d}}_{\substack{N-i+1}}$&$W_{\substack{N-i+1}}$&\dots&\cellcolor{LimeGreen!50!white}${W_{N}}$&\dots&$W_{N-i}$&$W_{\substack{N-i+1}}$&\dots&$W_{1}$&\dots&$W_{1}$&$W_{1}$&\dots&$W_{1}$&\dots&$W_{i}$\\
			\hhline{~----------------}
			&\dots&\dots&\dots&\dots&\dots&\dots&\dots&\dots&\dots&\dots&\dots&\dots&\dots&\dots&\dots&\dots\\
			\hhline{~----------------}
			\ldelim\{{3}{5mm}[\parbox{3.5mm}{${\textbf{\textit{E}}_{i}}$}]&$\textbf{\textit{d}}_{N+1}$&$W_{1}$&\dots&$W_{i}$&\dots&$W_{1}$&$W_{1}$&\dots&$W_{1}$&\dots&$W_{1}$&$W_{1}$&\dots&$W_{1}$&\dots&$W_{N}$\\
			\hhline{~----------------}
			&\dots&\dots&\dots&\dots&\dots&\dots&\dots&\dots&\dots&\dots&\dots&\dots&\dots&\dots&\dots&\dots\\
			\hhline{~----------------}
			&$\textbf{\textit{d}}_{2N-i}$&$W_{N-i}$&\dots&$W_{N-1}$&\dots&$W_{1}$&$W_{1}$&\dots&$W_{1}$&\dots&$W_{1}$&$W_{1}$&\dots&$W_{1}$&\dots&$W_{\substack{N-i-1}}$\\
			\hhline{~----------------}
			\ldelim\{{11.3}{5mm}[\parbox{3.5mm}{${\textbf{\textit{G}}_{i}}$}]&$\textbf{\textit{d}}_{\substack{2N-i+1}}$&$W_{\substack{N-i+1}}$&\dots&$W_{1}$&\dots&$W_{1}$&$W_{1}$&\dots&$W_{1}$&\dots&$W_{1}$&$W_{1}$&\dots&$W_{1}$&\dots&$W_{N-i}$\\
			\hhline{~----------------}
			&\dots&\dots&\dots&\dots&\dots&\dots&\dots&\dots&\dots&\dots&\dots&\dots&\dots&\dots&\dots&\dots\\
			
			\hhline{~----------------}
			&$\textbf{\textit{d}}_{\substack{K-j+2}}$&$W_{1}$&\dots&$W_{1}$&\dots&$W_{1}$&$W_{1}$&\dots&$W_{1}$&\dots&$W_{1}$&$W_{1}$&\dots&$W_{1}$&\dots&$W_{1}$\\
			\hhline{~----------------}
			&\dots&\dots&\dots&\dots&\dots&\dots&\dots&\dots&\dots&\dots&\dots&\dots&\dots&\dots&\dots&\dots\\
			\hhline{~----------------}
			&$\textbf{\textit{d}}_{\substack{K+N-j+1}}$&$W_{1}$&\dots&$W_{1}$&\dots&$W_{1}$&$W_{1}$&\dots&$W_{1}$&\dots&$W_{1}$&$W_{1}$&\dots&\cellcolor{LimeGreen!50!white}$W_{N}$&\dots&$W_{1}$\\
			
			\hhline{~----------------}
			\ldelim\{{3}{10.5mm}[\parbox{3.5mm}{${\textbf{\textit{P}}_{j}}$}]&$\textbf{\textit{d}}_{\substack{K+N-j+2}}$&$W_{1}$&\dots&$W_{1}$&\dots&$W_{1}$&$W_{1}$&\dots&$W_{1}$&\dots&$W_{1}$&$W_{1}$&\dots&$W_{1}$&\dots&$W_{1}$\\
			\hhline{~----------------}
			&\dots&\dots&\dots&\dots&\dots&\dots&\dots&\dots&\dots&\dots&\dots&\dots&\dots&\dots&\dots&\dots\\
			\hhline{~----------------}
			&$\textbf{\textit{d}}_{\substack{K+2N-j}}$&$W_{1}$&\dots&$W_{1}$&\dots&$W_{1}$&$W_{1}$&\dots&$W_{1}$&\dots&$W_{1}$&$W_{1}$&\dots&$W_{N-1}$&\dots&$W_{1}$\\
			\hhline{~----------------}
			
			\ldelim\{{6}{10.5mm}[\parbox{3.5mm}{${\textbf{\textit{Q}}_{j}}$}]&$\textbf{\textit{d}}_{\substack{K+2N-j+1}}$&$W_{1}$&\dots&$W_{1}$&\dots&$W_{1}$&$W_{1}$&\dots&$W_{1}$&\dots&$W_{1}$&$W_{1}$&\dots&$W_{1}$&\dots&$W_{1}$\\
			\hhline{~----------------}
			&\dots&\dots&\dots&\dots&\dots&\dots&\dots&\dots&\dots&\dots&\dots&\dots&\dots&\dots&\dots&\dots\\
			\hhline{~----------------}
			&$\textbf{\textit{d}}_{\substack{K-i+1}}$&$W_{1}$&\dots&$W_{1}$&\dots&$W_{N-i}$&$W_{\substack{N-i+1}}$&\dots&\cellcolor{LimeGreen!50!white}$W_{N}$&\dots&$W_{N-i}$&$W_{\substack{N-i+1}}$&\dots&$W_{1}$&\dots&$W_{1}$\\
			\hhline{~----------------}
			\ldelim\{{3}{5mm}[\parbox{3.5mm}{${\textbf{\textit{B}}_{i}}$}]&$\textbf{\textit{d}}_{\substack{K-i+2}}$&$W_{1}$&\dots&$W_{1}$&\dots&$W_{\substack{N-i+1}}$&$W_{\substack{N-i+2}}$&\dots&$W_{1}$&\dots&$W_{\substack{N-i+1}}$&$W_{\substack{N-i+2}}$&\dots&$W_{1}$&\dots&$W_{1}$\\
			\hhline{~----------------}
			&\dots&\dots&\dots&\dots&\dots&\dots&\dots&\dots&\dots&\dots&\dots&\dots&\dots&\dots&\dots&\dots\\
			\hhline{~----------------}
			&$\textbf{\textit{d}}_{K}$&$W_{1}$&\dots&$W_{i-1}$&\dots&$W_{N-1}$&$W_{N}$&\dots&$W_{i-1}$&\dots&$W_{N-2}$&$W_{N-1}$&\dots&$W_{1}$&\dots&$W_{1}$\\
			\hhline{~----------------}
		\end{tabular}
		
		\caption{Demand set $\{\textbf{\textit{d}}_{l}:1\leq l\leq K\}$}
		\label{demands case 2}
	\end{table*}
	\noindent Consider the demand $\textbf{\textit{b}}_{l}$ defined as, 
	\begin{equation*}
	\textbf{\textit{b}}_{l}=\left\{\begin{aligned}
	\textbf{\textit{d}}_{N-l+1}, &\text{ for $1\leq l\leq N$}\\
	\textbf{\textit{d}}_{K+N-l+1}, &\text{ for $N+1\leq l\leq K$}
	\end{aligned}\right.
	\end{equation*}
	It can be noted that in demand $\textbf{\textit{b}}_{l}$, the user $U_l$ requests for the file $W_{N}$. The following lemma is easy to obtain:
	\begin{lemma}
		\label{case 2 reduction}
		Let $\textbf{S},\textbf{\textit{T}}\subset\{\textbf{\textit{d}}_{1}\dots,\textbf{\textit{d}}_{K}\}\setminus \{\textbf{\textit{b}}_{l}\}$ be such that for every demand in $\textbf{\textit{T}}$, user $U_l$ requests the file $W_{1}$. We have
		\begin{IEEEeqnarray*}{rl}
		H(W_{[N-1]},Z_{l},&X_{\textbf{S}})+\sum_{j\in \textbf{\textit{T}}}H(X_{j})+\frac{\mid\textbf{\textit{T}}\mid}{N}H(Z_{l})\geq H(W_{[N-1]},Z_{l},X_{\textbf{S}\cup \textbf{\textit{T}}})+\mid\textbf{\textit{T}}\mid
		\end{IEEEeqnarray*}
	\end{lemma}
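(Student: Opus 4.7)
The plan is to prove the inequality by induction on $|\textbf{T}|$, adding the demands from $\textbf{T}$ to $\textbf{S}$ one at a time. The base case $|\textbf{T}|=0$ is trivial, and the whole lemma follows once I establish the single-demand inductive step: for any $j \in \textbf{T}$ and any set of demands $\textbf{S}' \subset \{\textbf{\textit{d}}_1,\dots,\textbf{\textit{d}}_K\}$,
\begin{equation*}
H(W_{[N-1]},Z_l,X_{\textbf{S}'})+H(X_j)+\tfrac{1}{N}H(Z_l)\;\geq\;H(W_{[N-1]},Z_l,X_{\textbf{S}'\cup\{j\}})+1.
\end{equation*}
Summing this across the $|\textbf{T}|$ demands (telescoping $\textbf{S}'$ from $\textbf{S}$ up to $\textbf{S}\cup\textbf{T}$) yields the stated inequality.

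To prove the single-demand step, let $V=(W_{[N-1]},Z_l,X_{\textbf{S}'})$ and rewrite the desired bound as $I(X_j;V)+\tfrac{1}{N}H(Z_l)\geq 1$. Since $V$ contains $(W_1,Z_l)$, the chain rule gives $I(X_j;V)\geq I(X_j;W_1,Z_l)$, so it suffices to show $I(X_j;W_1,Z_l)\geq 1-\tfrac{1}{N}H(Z_l)$. Expanding the mutual information, $I(X_j;W_1,Z_l)=H(W_1,Z_l)+H(X_j)-H(W_1,Z_l,X_j)$. Because user $U_l$ requests $W_1$ under demand $\textbf{\textit{d}}_j$, property (\ref{I.1}) gives $H(W_1,Z_l,X_j)=H(Z_l,X_j)\leq H(Z_l)+H(X_j)$. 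Thus the whole problem reduces to the claim $H(W_1,Z_l)\geq 1+\tfrac{N-1}{N}H(Z_l)$.

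This last claim is the heart of the argument and is obtained by combining file symmetry with iterated submodularity. By (\ref{symmetry_file}), $H(W_n,Z_l)=H(W_1,Z_l)$ for every $n\in[N]$, so $N\cdot H(W_1,Z_l)=\sum_{n=1}^{N}H(W_n,Z_l)$. Applying the submodularity inequality $N-1$ times yields
\begin{equation*}
\sum_{n=1}^{N}H(W_n,Z_l)\;\geq\;H(W_{[N]},Z_l)+(N-1)H(Z_l),
\end{equation*}
and using (\ref{I.2}) to identify $H(W_{[N]},Z_l)=H(W_{[N]})=N$ gives the required lower bound on $H(W_1,Z_l)$.

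The main obstacle is seeing where the curious coefficient $\tfrac{1}{N}$ in front of $H(Z_l)$ comes from; everything hinges on noticing that file symmetry lets one spread $N$ copies of $H(W_1,Z_l)$ over the joint cache-plus-all-files entropy, and that the user's request for $W_1$ in each $\textbf{\textit{d}}_j \in \textbf{T}$ then turns (\ref{I.1}) into the mutual-information lower bound above. Once this step is isolated, the induction and the iterated submodularity are straightforward bookkeeping.
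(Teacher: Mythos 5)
Your proof is correct, and the quantitative heart of it --- the bound $H(W_1,Z_l)\geq 1+\tfrac{N-1}{N}H(Z_l)$, obtained from the file-symmetry identity (\ref{symmetry_file}) together with $N-1$ applications of submodularity and (\ref{I.2}) --- is exactly the same bound the paper relies on. The difference is one of organization. The paper handles all of $\textbf{\textit{T}}$ at once: it first absorbs $\sum_{j\in\textbf{\textit{T}}} H(X_j)+|\textbf{\textit{T}}|H(Z_l)\geq\sum_j H(W_1,Z_l,X_j)$ using subadditivity and (\ref{I.1}), then merges the $X_j$ and $X_{\textbf{\textit{S}}}$ through a chain of submodularities to reach $H(W_{[N-1]},Z_l,X_{\textbf{\textit{S}}\cup\textbf{\textit{T}}})+|\textbf{\textit{T}}|H(W_1,Z_l)-\tfrac{|\textbf{\textit{T}}|(N-1)}{N}H(Z_l)$, and only then invokes the key bound. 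You instead isolate a clean per-demand lemma, recast it as the mutual-information statement $I(X_j;W_1,Z_l)\geq 1-\tfrac{1}{N}H(Z_l)$, and telescope. Both routes are equally tight, and yours makes the provenance of the factor $|\textbf{\textit{T}}|$ and the coefficient $\tfrac{1}{N}$ more transparent. One small caveat: the telescoping step $H(W_{[N-1]},Z_l,X_{\textbf{\textit{S}}'},X_j)=H(W_{[N-1]},Z_l,X_{\textbf{\textit{S}}'\cup\{j\}})$ implicitly assumes the demand $j$ being adjoined is not already in $\textbf{\textit{S}}'$; this holds throughout since the lemma is only invoked with $\textbf{\textit{S}}\cap\textbf{\textit{T}}=\emptyset$ (the paper's all-at-once submodularity step does not require this), so it is worth stating the disjointness assumption explicitly if you adopt this route.
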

	\begin{proof}
		We have,
		\begin{IEEEeqnarray*}{rl}
		H(W_{[N-1]}&,Z_{l},X_{\textbf{\textit{S}}})+\!\!\sum_{j\in \textbf{\textit{T}}}\!H(X_{j})+\frac{\mid\textbf{\textit{T}}\mid}{N}H(Z_{l})\\
		=&H(W_{[N-1]},Z_{l},X_{\textbf{\textit{S}}})+\sum_{j\in \textbf{\textit{T}}}H(X_{j})+\mid\textbf{\textit{T}}\mid H(Z_{l})-\frac{\mid\textbf{\textit{T}}\mid(N-1)}{N}H(Z_{l})\\
		\overset{(a)}{\geq}&H(W_{[N-1]},Z_{l},X_{\textbf{\textit{S}}})+\sum_{j\in \textbf{\textit{T}}}H(Z_{l},X_{j})-\frac{\mid\textbf{\textit{T}}\mid(N-1)}{N}H(Z_{l})\\
		\overset{(b)}{=}&H(W_{[N-1]},Z_{l},X_{\textbf{\textit{S}}})+\sum_{j\in \textbf{\textit{T}}}H(W_{1},Z_{l},X_{j})-\frac{\mid\textbf{\textit{T}}\mid(N-1)}{N}H(Z_{l})\\
		\overset{(a)}{\geq}&H(W_{[N-1]},Z_{l},X_{\textbf{\textit{S}}})+H(W_{1},Z_{l},X_{\textbf{\textit{T}}})+(\mid\textbf{\textit{T}}\mid-1)H(W_{1},Z_{l})-\frac{\mid\textbf{\textit{T}}\mid(N-1)}{N}H(Z_{l})\\
		\overset{(a)}{\geq}&H(W_{[N-1]},Z_{l},X_{\textbf{\textit{S}}\cup \textbf{\textit{T}}})+\mid\textbf{\textit{T}}\mid H(W_{1},Z_{l})-\frac{\mid\textbf{\textit{T}}\mid(N-1)}{N}H(Z_{l})\\
		\overset{(c)}{=}&H(W_{[N-1]},Z_{l},X_{\textbf{\textit{S}}\cup \textbf{\textit{T}}})+\frac{\mid\textbf{\textit{T}}\mid}{N}\Big(\sum_{i=1}^{N}H(W_{i},Z_{l})\Big)-\frac{\mid\textbf{\textit{T}}\mid(N-1)}{N}H(Z_{l})\\
		\overset{(a)}{\geq}&H(W_{[N-1]},Z_{l},X_{\textbf{\textit{S}}\cup \textbf{\textit{T}}})+\frac{\mid\textbf{\textit{T}}\mid}{N}\Big(H(W_{[N]},Z_{l})\Big)+ \Big(\frac{\mid\textbf{\textit{T}}\mid(N-1)}{N}H(Z_{l})\Big)-\frac{\mid\textbf{\textit{T}}\mid(N-1)}{N}H(Z_{l})\\
		\overset{(d)}{=}&H(W_{[N-1]},Z_{l},X_{\textbf{\textit{S}}\cup \textbf{\textit{T}}})+\frac{\mid\textbf{\textit{T}}\mid}{N}H(W_{[N]})=H(W_{[N-1]},Z_{l},X_{\textbf{\textit{S}}\cup \textbf{\textit{T}}})+\mid\textbf{\textit{T}}\mid
		\end{IEEEeqnarray*}
		\noindent where
		
		\hspace{-4mm}\begin{tabular}{cl}
			$(a)$& follows from submodularity property of entropy, \\$(b)$& follows from (\ref{I.1}), \\$(c)$& follows from (\ref{symmetry_file}), \\$(d)$& follows from (\ref{I.2}).
		\end{tabular}\\
	\end{proof}
	\noindent For $1\leq i \leq N$, let us consider the set of demands as shown below:
	\vspace{-5mm}
	\begin{table}[!ht]
		\centering
		\begin{tabular}{|c|c|c|c|}
			\hline
			Set&Demands&Number&Files Requested by  $U_i$\\
			\hline
			$\textbf{\textit{A}}_{i}$&$\textbf{\textit{d}}_{1}$, \dots, $\textbf{\textit{d}}_{N-i}$&$N-i$&$W_{i}$, \dots, $W_{N-1}$\\
			\hline
			$\textbf{\textit{B}}_{i}$&$\textbf{\textit{d}}_{K-i+2}$, \dots, $\textbf{\textit{d}}_{K}$&$i-1$&$W_{1}$, \dots, $W_{i-1}$\\
			\hline
			$\textbf{\textit{E}}_{i}$&$\textbf{\textit{d}}_{N+1}$, \dots, $\textbf{\textit{d}}_{2N-i}$&$N-i$&$W_{i}$, \dots, $W_{N-1}$\\
			\hline
			$\textbf{\textit{G}}_{i}$&$\textbf{\textit{d}}_{2N-i+1}$, \dots, $\textbf{\textit{d}}_{K-i+1}$&$K-2N+1$&$W_{1}$\\
			\hline
		\end{tabular}
	\end{table}

	\noindent These set are also indicated in TABLE \ref{demands case 2}. We also have a set of demands $\textbf{\textit{L}}_{i}$ defined as
	\begin{IEEEeqnarray}{rl}
	\label{case 2 ABCDE}
	{\textbf{\textit{L}}_{i}}&={\textbf{\textit{A}}_{i}}\cup {\textbf{\textit{B}}_{i}}\cup {\textbf{\textit{E}}_{i}}\cup{\textbf{\textit{G}}_{i}}.
	\end{IEEEeqnarray}
	Note that
	\begin{IEEEeqnarray}{rl}
	\label{case 2 null set}
	{\textbf{\textit{A}}_{N}}={\textbf{\textit{B}}_{1}}={\textbf{\textit{E}}_{N}}&=\phi\\
	\label{case 2 E}
	{\textbf{\textit{L}}_{i+1}}\cup \{\textbf{\textit{b}}_{{i+1}}\}&={\textbf{\textit{L}}_{i}}\\
	\label{case 2 B}
	\textbf{\textit{B}}_{i}\cup \{\textbf{\textit{b}}_{{N+i}}\}&=\textbf{\textit{B}}_{i+1}
	\end{IEEEeqnarray}
	\noindent It can be noted that in the demand set $\textbf{\textit{B}}_{i}$, both users $U_i$ and $U_{N+i}$ are requesting for the same set of files (for $1\leq i\leq N-1$). Note that $\mid \textbf{\textit{A}}_{i}\cup \textbf{\textit{B}}_{i}\mid=\mid \textbf{\textit{B}}_{i}\cup \textbf{\textit{E}}_{i} \mid=N-1$. Thus, we have
	\begin{align}
	\label{case2:A,B}
	M+(N-1)R\geq H(Z_{i})+H(X_{\textbf{\textit{A}}_{i}\cup \textbf{\textit{B}}_{i}})\geq H(Z_{i},X_{\textbf{\textit{A}}_{i}\cup \textbf{\textit{B}}_{i}})
	\end{align}
	Similarly,
	\begin{align}
	\label{case2:B,E}
	M+(N-1)R\geq H(Z_{i})+H(X_{\textbf{\textit{B}}_{i}\cup \textbf{\textit{E}}_{i}})\geq H(Z_{i},X_{\textbf{\textit{B}}_{i}\cup \textbf{\textit{E}}_{i}})
	\end{align}
	The following lemma is easy to obtain:
	\begin{lemma}
		\label{Lemma: case 2 ABEGL}
		The demand sets $\textbf{\textit{B}}_{i}$ and $\textbf{\textit{L}}_{i}$, defined as above, satisfy 
		\begin{IEEEeqnarray*}{rl}
		KM+(KN-2N+1)R\geq& H(W_{[N-1]},X_{\textbf{\textit{L}}_{N}})+\sum_{i=1}^{N-1}\bigg(H(W_{[N-1]},Z_{N+i},X_{\textbf{\textit{B}}_{i}})\bigg)+N(K-N)
		\end{IEEEeqnarray*}
	\end{lemma}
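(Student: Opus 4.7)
The plan is to combine the per-user bounds (\ref{case2:A,B}) and (\ref{case2:B,E}) through submodularity so as to surface $H(W_{[N-1]},Z_i,X_{\textbf{\textit{B}}_{i}})$ terms, then enlarge the supports from $\textbf{\textit{A}}_{i}\cup \textbf{\textit{B}}_{i}\cup \textbf{\textit{E}}_{i}$ to $\textbf{\textit{L}}_{i}$ by absorbing the ``constant $W_1$'' block $\textbf{\textit{G}}_{i}$ via Lemma \ref{case 2 reduction}, and finally telescope $\sum_{i=1}^{N} H(W_{[N-1]},Z_i,X_{\textbf{\textit{L}}_{i}})$ using Lemma \ref{sum increment} together with one application of the user-swap symmetry (\ref{symmetry}).

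Concretely, I would first sum (\ref{case2:A,B}) over $i=1,\dots,N$ and (\ref{case2:B,E}) over $i=1,\dots,N-1$ to obtain
\begin{equation*}
(2N-1)M + (2N-1)(N-1)R \geq \sum_{i=1}^{N} H(Z_i,X_{\textbf{\textit{A}}_{i}\cup \textbf{\textit{B}}_{i}}) + \sum_{i=1}^{N-1} H(Z_i,X_{\textbf{\textit{B}}_{i}\cup \textbf{\textit{E}}_{i}}).
\end{equation*}
A cyclic-shift inspection of Table \ref{demands case 2} shows that in each term the relevant user $U_i$ requests the full set $W_{[N-1]}$ across the indicated demands, so (\ref{I.1}) lets me insert $W_{[N-1]}$ inside every entropy. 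For $i=1,\dots,N-1$ the submodularity identity on the pair, whose intersection is $\{W_{[N-1]},Z_i,X_{\textbf{\textit{B}}_{i}}\}$ (valid because $\textbf{\textit{A}}_{i}\cap \textbf{\textit{E}}_{i}=\emptyset$), splits each pair as $H(W_{[N-1]},Z_i,X_{\textbf{\textit{A}}_{i}\cup \textbf{\textit{B}}_{i}\cup \textbf{\textit{E}}_{i}}) + H(W_{[N-1]},Z_i,X_{\textbf{\textit{B}}_{i}})$, while the lone $i=N$ term already equals $H(W_{[N-1]},Z_N,X_{\textbf{\textit{A}}_{N}\cup \textbf{\textit{B}}_{N}\cup \textbf{\textit{E}}_{N}})$ by (\ref{case 2 null set}).

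Next, for each $i=1,\dots,N$ I invoke Lemma \ref{case 2 reduction} with $\textbf{\textit{S}}=\textbf{\textit{A}}_{i}\cup \textbf{\textit{B}}_{i}\cup \textbf{\textit{E}}_{i}$ and $\textbf{\textit{T}}=\textbf{\textit{G}}_{i}$; the hypotheses hold because $U_i$ requests $W_1$ in every demand in $\textbf{\textit{G}}_{i}$ and because $\textbf{\textit{b}}_{i}=\textbf{\textit{d}}_{N-i+1}\notin \textbf{\textit{L}}_{i}$. Summing the $N$ resulting bounds and using $H(X_j)\leq R$ and $H(Z_i)\leq M$ contributes an additional $(K-2N+1)M + N(K-2N+1)R$ to the left side in exchange for promoting each $\textbf{\textit{A}}_{i}\cup \textbf{\textit{B}}_{i}\cup \textbf{\textit{E}}_{i}$ to $\textbf{\textit{L}}_{i}$ and accumulating the constant $N(K-2N+1)$ on the right. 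The bookkeeping closes exactly: $(2N-1)+(K-2N+1)=K$ for the $M$ coefficient and $(2N-1)(N-1)+N(K-2N+1)=KN-2N+1$ for the $R$ coefficient, yielding
\begin{equation*}
KM + (KN-2N+1)R \geq \sum_{i=1}^{N} H(W_{[N-1]},Z_i,X_{\textbf{\textit{L}}_{i}}) + \sum_{i=1}^{N-1} H(W_{[N-1]},Z_i,X_{\textbf{\textit{B}}_{i}}) + N(K-2N+1).
\end{equation*}

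To conclude, I drop $Z_1$ from the $i=1$ summand of the first sum by monotonicity and apply Lemma \ref{sum increment} with $\textbf{\textit{S}}_{i}=\textbf{\textit{L}}_{i}$, $l=1$, $j=N$ (its hypothesis being exactly (\ref{case 2 E})) to obtain $\sum_{i=1}^{N} H(W_{[N-1]},Z_i,X_{\textbf{\textit{L}}_{i}}) \geq H(W_{[N-1]},X_{\textbf{\textit{L}}_{N}})+N(N-1)$, after which the constants combine as $N(N-1)+N(K-2N+1)=N(K-N)$. Finally, one application of (\ref{symmetry}) with the transposition $\pi=(i,N+i)$, which fixes every demand in $\textbf{\textit{B}}_{i}$ because $U_i$ and $U_{N+i}$ request identical files there, converts each $H(W_{[N-1]},Z_i,X_{\textbf{\textit{B}}_{i}})$ into $H(W_{[N-1]},Z_{N+i},X_{\textbf{\textit{B}}_{i}})$ and gives the claimed bound. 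The main obstacle is the tight coefficient bookkeeping: the choice of exactly $N$ copies of (\ref{case2:A,B}) versus $N-1$ copies of (\ref{case2:B,E}), together with the $\tfrac{|\textbf{\textit{G}}_{i}|}{N}H(Z_i)$ slack in Lemma \ref{case 2 reduction}, is what makes the $M$ and $R$ budgets close simultaneously.
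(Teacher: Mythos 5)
Your proof is correct and follows essentially the same approach as the paper's: sum (\ref{case2:A,B}) over $i=1,\dots,N$ and (\ref{case2:B,E}) over $i=1,\dots,N-1$, promote supports to $\textbf{\textit{L}}_i$ using Lemma \ref{case 2 reduction}, split off $H(W_{[N-1]},Z_i,X_{\textbf{\textit{B}}_i})$ by submodularity, telescope with Lemma \ref{sum increment}, and finish with the user-swap symmetry. The only difference is cosmetic: you apply submodularity first (with intersection $\textbf{\textit{B}}_i$) and then absorb $\textbf{\textit{G}}_i$ via Lemma \ref{case 2 reduction} with $\textbf{\textit{S}}=\textbf{\textit{A}}_i\cup\textbf{\textit{B}}_i\cup\textbf{\textit{E}}_i$, whereas the paper absorbs $\textbf{\textit{G}}_i$ first with $\textbf{\textit{S}}=\textbf{\textit{A}}_i\cup\textbf{\textit{B}}_i$ and applies submodularity afterward; both orders land on the identical intermediate bound and the same coefficient bookkeeping.
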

	\begin{proof}
		We have,
		\begin{IEEEeqnarray*}{rl}
			KM&+(KN-2N+1)R= N\left( M+(N-1)R +\frac{(K-2N+1)}{N}M+(K-2N+1)R\right)\\&+(N-1)(M+(N-1)R)\\
			\overset{(a)}{\geq}& \sum_{i=1}^{N}\bigg(H(Z_{i},X_{\textbf{\textit{A}}_{i}\cup\textbf{\textit{B}}_{i}})+\frac{(K-2N+1)}{N}H(Z_{i})+\sum_{l\in \textbf{\textit{G}}_{i}}H(X_{l})\bigg)+\sum_{i=1}^{N-1}\left(H(Z_{i},X_{\textbf{\textit{B}}_{i}\cup\textbf{\textit{E}}_{i}})\right)\\
			\overset{(b)}{=}& \sum_{i=1}^{N}\bigg(H(W_{[N-1]},Z_{i},X_{\textbf{\textit{A}}_{i}\cup\textbf{\textit{B}}_{i}})+\frac{(K-2N+1)}{N}H(Z_{i})+\sum_{l\in \textbf{\textit{G}}_{i}}H(X_{l})\bigg)\\&+\sum_{i=1}^{N-1}H(W_{[N-1]},Z_{i},X_{\textbf{\textit{B}}_{i}\cup\textbf{\textit{E}}_{i}})\\
			\overset{(c)}{\geq}& \sum_{i=1}^{N}\bigg(H(W_{[N-1]},Z_{i},X_{\textbf{\textit{A}}_{i}\cup\textbf{\textit{B}}_{i}\cup \textbf{\textit{G}}_{i}})+(K-2N+1)\bigg)+\sum_{i=1}^{N-1}H(W_{[N-1]},Z_{i},X_{\textbf{\textit{B}}_{i}\cup\textbf{\textit{E}}_{i}})\\
			\overset{}{=}&H(W_{[N-1]},Z_{N},X_{\textbf{\textit{A}}_{N}\cup\textbf{\textit{B}}_{N}\cup \textbf{\textit{G}}_{N}})+ \sum_{i=1}^{N-1}\bigg(H(W_{[N-1]},Z_{i},X_{\textbf{\textit{A}}_{i}\cup\textbf{\textit{B}}_{i}\cup \textbf{\textit{G}}_{i}})+H(W_{[N-1]},Z_{i},X_{\textbf{\textit{B}}_{i}\cup\textbf{\textit{E}}_{i}})\bigg)\\&+N(K-2N+1)\\
			\overset{(d)}{\geq}&H(W_{[N-1]},Z_{N},X_{\textbf{\textit{A}}_{N}\cup\textbf{\textit{B}}_{N}\cup \textbf{\textit{G}}_{N}\cup \textbf{\textit{E}}_{N}})+ \sum_{i=1}^{N-1}\bigg(H(W_{[N-1]},Z_{i},X_{\textbf{\textit{A}}_{i}\cup\textbf{\textit{B}}_{i}\cup\textbf{\textit{E}}_{i}\cup \textbf{\textit{G}}_{i}})+H(W_{[N-1]},Z_{i},X_{\textbf{\textit{B}}_{i}})\bigg)\\&+N(K-2N+1)\\
			\overset{(e)}{=}& \sum_{i=1}^{N}H(W_{[N-1]},Z_{i},X_{\textbf{\textit{L}}_{i}})+\sum_{i=1}^{N-1}H(W_{[N-1]},Z_{i},X_{\textbf{\textit{B}}_{i}})+N(K-2N+1)\\
			\overset{}{\geq}&\bigg(H(W_{[N-1]},X_{\textbf{\textit{L}}_{1}})+ \sum_{i=2}^{N}H(W_{[N-1]},Z_{i},X_{\textbf{\textit{L}}_{i}})\bigg)+\sum_{i=1}^{N-1}H(W_{[N-1]},Z_{i},X_{\textbf{\textit{B}}_{i}})+N(K-2N+1)\\
			\overset{(f)}{\geq}&(N-1)N+H(W_{[N-1]},X_{\textbf{\textit{L}}_{N}})+\sum_{i=1}^{N-1}H(W_{[N-1]},Z_{i},X_{\textbf{\textit{B}}_{i}})+N(K-2N+1)\\
			\overset{(g)}{=}& H(W_{[N-1]},X_{\textbf{\textit{L}}_{N}})+\sum_{i=1}^{N-1}H(W_{[N-1]},Z_{N+i},X_{\textbf{\textit{B}}_{i}})+N(K-N)
		\end{IEEEeqnarray*}
		
		\noindent where
		
		\hspace{-6mm}	\begin{tabular}{rl}
			$(a)$& follows from (\ref{case2:A,B}) and (\ref{case2:B,E}),\\
			$(b)$& follows from (\ref{I.1}) and definition of sets $\textbf{\textit{A}}_{i}$, $\textbf{\textit{B}}_{i}$ and $\textbf{\textit{E}}_{i}$,\\
			$(c)$& follows from Lemma \ref{case 2 reduction} with $\textbf{\textit{S}}=\textbf{\textit{A}}_{i}\cup \textbf{\textit{B}}_{i}$ and $\textbf{\textit{T}}=\textbf{\textit{G}}_{i}$,\\
			$(d)$& follows from the submodularity property of entropy and the fact that $\textbf{\textit{E}}_{N}=\phi$,\\ 
			$(e)$& follows from  (\ref{case 2 ABCDE}),\\ 
			$(f)$& follows from Lemma \ref{sum increment} with $\textbf{\textit{S}}_{i}=\textbf{\textit{L}}_{i}$, $l=1$, $j=N$ and (\ref{case 2 E}),\\
			$(g)$& follows from (\ref{symmetry}).
		\end{tabular}
		
	\end{proof}
	
	\noindent Now, for $2N\leq j \leq K$, consider another set of demands as shown below:
	\begin{table}[ht]
		\centering
		\begin{tabular}{|c|c|c|c|}
			\hline
			Set&Demands&Number&$\begin{array}{c}
			\text{Files Requested} \text{ by $U_{j}$}
			\end{array}$\\
			\hline
			$\textbf{\textit{P}}_{j}$&$\textbf{\textit{d}}_{K+N-i+2}$, \dots, $\textbf{\textit{d}}_{K+2N-j}$&$N-1$& $W_{1}$, \dots, $W_{N-1}$\\
			\hline
			$\textbf{\textit{Q}}_{j}$&$\textbf{\textit{d}}_{K+2N-j+1}$, \dots, $\textbf{\textit{d}}_{K}$&$j-2N$&$W_{1}$\\
			\hline
		\end{tabular}
	\end{table}

	\noindent These set are also indicated in TABLE \ref{demands case 2}. We also have a set of demands $\textbf{\textit{T}}_{j}$ defined as
	\begin{IEEEeqnarray}{rc}
	\label{case 2 PQT}
	{\textbf{\textit{T}}_{j}}=&{\textbf{\textit{P}}_{j}}\cup {\textbf{\textit{Q}}_{j}}
	\end{IEEEeqnarray}
	Note that
	\begin{IEEEeqnarray}{rl}
	\label{case 2 Q null}
	\textbf{\textit{Q}}_{2N}&=\phi\\
	\label{case 2 T}
	\textbf{\textit{T}}_{j+1}\cup \{{\textbf{\textit{b}}_{{j+1}}}\}&=\textbf{\textit{T}}_{j}\\
	\label{case 2 ET}
	{\textbf{\textit{T}}_{K}}\cup \{{\textbf{\textit{b}}_{{K}}}\}&={\textbf{\textit{L}}_{N}}\\
	\label{case 2 BT}
	\textbf{\textit{B}}_{N-1}\cup \{{\textbf{\textit{b}}_{{2N-1}}}\}&= \textbf{\textit{T}}_{2N}
	\end{IEEEeqnarray}
	Note that $\mid \textbf{\textit{P}}_{j}\mid=N-1$. Thus, we have
	\begin{align}
	\label{case2:P}
	M+(N-1)R\geq H(Z_{j})+H(X_{\textbf{\textit{P}}_{j}})\geq H(Z_{j},X_{\textbf{\textit{P}}_{j}})
	\end{align}
	The following lemma is easy to obtain:
	\begin{lemma}
		\label{Lemma: case 2 PQT}
		The demand set $\textbf{\textit{T}}_{j}$, defined as above, satisfy
		\begin{IEEEeqnarray*}{l}
			\frac{K(K-2N+1)}{2N}M+\frac{(K-2)(K-2N+1)}{2}R\geq \sum_{j=2N}^{K}\!H(W_{[N-1]},Z_{j},X_{\textbf{\textit{T}}_{j}})+\frac{(K-2N+1)(K-2N)}{2}
		\end{IEEEeqnarray*}
	\end{lemma}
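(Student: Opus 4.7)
The plan is to establish a per-user bound for each $j\in\{2N,\dots,K\}$ and then sum these over $j$. First I would verify from the cyclic structure of Table \ref{demands case 2} that (i) across the $N-1$ demands in $\textbf{\textit{P}}_{j}$ the user $U_{j}$ requests each of the files $W_{1},\dots,W_{N-1}$ exactly once, and (ii) in each of the $j-2N$ demands in $\textbf{\textit{Q}}_{j}$ the user $U_{j}$ requests $W_{1}$. These are immediate from tracking the shifted position $(j+l-1)\bmod K$: for $l$ in the $\textbf{\textit{P}}_{j}$ range it sweeps $\{N+1,\dots,2N-1\}$, picking up the block $W_{1},\dots,W_{N-1}$ from the second copy of files in $\textbf{\textit{d}}_{1}$, while for $l$ in the $\textbf{\textit{Q}}_{j}$ range it sweeps the tail $\{2N,\dots,j-1\}\subset\{2N,\dots,K\}$, which is filled entirely with $W_{1}$ in $\textbf{\textit{d}}_{1}$.

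Next, starting from (\ref{case2:P}) and absorbing the $N-1$ files requested by $U_{j}$ across $\textbf{\textit{P}}_{j}$ via iterated application of (\ref{I.1}), I obtain
\begin{equation*}
M+(N-1)R \;\geq\; H(Z_{j})+H(X_{\textbf{\textit{P}}_{j}})\;\geq\; H(W_{[N-1]},Z_{j},X_{\textbf{\textit{P}}_{j}}).
\end{equation*}
Since every demand in $\textbf{\textit{Q}}_{j}$ has $U_{j}$ requesting $W_{1}$, Lemma \ref{case 2 reduction} applies with $l=j$, $\textbf{\textit{S}}=\textbf{\textit{P}}_{j}$ and $\textbf{\textit{T}}=\textbf{\textit{Q}}_{j}$ (so $|\textbf{\textit{T}}|=j-2N$), giving
\begin{equation*}
H(W_{[N-1]},Z_{j},X_{\textbf{\textit{P}}_{j}})+(j-2N)R+\tfrac{j-2N}{N}M \;\geq\; H(W_{[N-1]},Z_{j},X_{\textbf{\textit{T}}_{j}})+(j-2N),
\end{equation*}
where I used (\ref{case2:P})-style bounds $H(X_{l})\leq R$ and $H(Z_{j})\leq M$. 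Adding these two inequalities and recalling $\textbf{\textit{T}}_{j}=\textbf{\textit{P}}_{j}\cup\textbf{\textit{Q}}_{j}$ from (\ref{case 2 PQT}) yields the per-user inequality
\begin{equation*}
\tfrac{j-N}{N}M+(j-N-1)R \;\geq\; H(W_{[N-1]},Z_{j},X_{\textbf{\textit{T}}_{j}})+(j-2N).
\end{equation*}

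Finally I would sum this over $j=2N,\dots,K$. The coefficient sums are elementary:
\begin{equation*}
\sum_{j=2N}^{K}\tfrac{j-N}{N}=\tfrac{K(K-2N+1)}{2N},\quad \sum_{j=2N}^{K}(j-N-1)=\tfrac{(K-2)(K-2N+1)}{2},\quad \sum_{j=2N}^{K}(j-2N)=\tfrac{(K-2N)(K-2N+1)}{2},
\end{equation*}
which produces exactly the coefficients claimed in the lemma. The main obstacle is bookkeeping rather than inequality-chasing: one must carefully confirm that the ranges of $l$ defining $\textbf{\textit{P}}_{j}$ and $\textbf{\textit{Q}}_{j}$ land in the correct portions of the cyclically shifted demand vector so that (i) the $N-1$ files recoverable from $\textbf{\textit{P}}_{j}$ are precisely $W_{1},\dots,W_{N-1}$ and (ii) the hypothesis of Lemma \ref{case 2 reduction} is genuinely satisfied for $\textbf{\textit{Q}}_{j}$; once this is confirmed, everything else reduces to a direct application of the two prior lemmas followed by the arithmetic sums above.
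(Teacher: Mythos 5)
Your proof is correct and is essentially the paper's argument: both use the per-user decomposition of the left-hand side into $M+(N-1)R+\frac{j-2N}{N}M+(j-2N)R$, invoke (\ref{case2:P}) together with (\ref{I.1}) to absorb $W_{[N-1]}$ into $H(Z_j,X_{\textbf{\textit{P}}_j})$, apply Lemma \ref{case 2 reduction} with $\textbf{\textit{S}}=\textbf{\textit{P}}_j$, $\textbf{\textit{T}}=\textbf{\textit{Q}}_j$, and then sum the arithmetic coefficients. The only difference is presentation order — you first isolate a clean per-$j$ inequality and then sum, whereas the paper carries the sum from the start — and one small mislabel: the bounds $H(X_l)\le R$ and $H(Z_j)\le M$ come from (\ref{R set}) and (\ref{M}) rather than from (\ref{case2:P}).
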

	\begin{proof}
		We have,
		\begin{IEEEeqnarray*}{rl}
			\frac{K(K-2N+1)}{2N}&M+\frac{(K-2)(K-2N+1)}{2}R\\=&(K-2N+1)\left(M+(N-1)R\right)+\sum_{j=2N}^{K}\left(\frac{(j-2N)}{N}M+(j-2N)R\right)\\
			\overset{(a)}{\geq}& \sum_{j=2N}^{K}\bigg(H(Z_{j},X_{\textbf{\textit{P}}_{j}})+\frac{(j-2N)}{N}H(Z_{j})+\sum_{l\in \textbf{\textit{Q}}_{j}}H(X_{l})\bigg)\\
			\overset{(b)}{=}&\sum_{j=2N}^{K}\bigg(H(W_{[N-1]},Z_{j},X_{\textbf{\textit{P}}_{j}})+\frac{(j-2N)}{N}H(Z_{j})+\sum_{l\in \textbf{\textit{Q}}_{j}}H(X_{l})\bigg)\\
			\overset{(c)}{\geq}&\sum_{j=2N}^{K}\bigg(H(W_{[N-1]},Z_{j},X_{\textbf{\textit{P}}_{j}\cup\textbf{\textit{Q}}_{j}})+\frac{(j-2N)}{N}H(W_{[N]})\bigg)\\
			\overset{(d)}{=}&\sum_{j=2N}^{K}H(W_{[N-1]},Z_{j},X_{\textbf{\textit{T}}_{j}})+\frac{(K-2N+1)(K-2N)}{2}
		\end{IEEEeqnarray*}
		where
		
	\hspace{-6mm}	\begin{tabular}{rl}
			$(a)$& follows from (\ref{case2:P}) and the fact that $\textbf{\textit{Q}}_{2N}=\phi$,\\
			$(b)$& follows from (\ref{I.1}) and definition of set $\textbf{\textit{P}}_{j}$,\\
			$(c)$& follows from Lemma \ref{case 2 reduction} with $\textbf{\textit{S}}=\textbf{\textit{P}}_{j}$ and $\textbf{\textit{T}}=\textbf{\textit{Q}}_{j}$,\\
			$(d)$& follows from (\ref{case 2 PQT}).
		\end{tabular}\\
	\end{proof}

	\noindent Using the above lemma, we can obtain the following result:
	\begin{theorem}
		\label{Main_Result_2}
		For the $(N, K)$ cache network, when $1\leq N\leq\big\lceil\frac{K+1}{2}\big\rceil$, achievable memory rate pairs $(M,R)$ must satisfy the constraint
		\begin{equation*}
		\frac{K(K+1)}{2N}M+\frac{K(K-1)}{2}R\geq \frac{K^{2}+K-2}{2}
		\end{equation*}
	\end{theorem}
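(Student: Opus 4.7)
The plan is to add the two intermediate bounds already in hand and then collapse the resulting entropy sums via the chaining identities, mirroring the strategy used in the proof of Theorem~\ref{Main_Result_1}. First I would verify the routine algebra that adding Lemma~\ref{Lemma: case 2 ABEGL} and Lemma~\ref{Lemma: case 2 PQT} produces exactly $\frac{K(K+1)}{2N}M + \frac{K(K-1)}{2}R$ on the left-hand side, leaving on the right-hand side the aggregate constant $N(K-N)+\frac{(K-2N+1)(K-2N)}{2}$ together with the three groups of entropy terms $H(W_{[N-1]},X_{\textbf{\textit{L}}_N})$, $\sum_{j=2N}^{K}H(W_{[N-1]},Z_j,X_{\textbf{\textit{T}}_j})$, and $\sum_{i=1}^{N-1}H(W_{[N-1]},Z_{N+i},X_{\textbf{\textit{B}}_i})$.

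The second step is to chain the $\textbf{\textit{T}}_j$ terms. The relation (\ref{case 2 T}) makes $\{\textbf{\textit{T}}_j\}$ a nested family with $\textbf{\textit{T}}_{j+1}=\textbf{\textit{T}}_j\cup\{\textbf{\textit{b}}_j\}$, and (\ref{case 2 ET}) identifies $\textbf{\textit{L}}_N$ as the next element $\textbf{\textit{T}}_{K+1}$ in this chain. Applying the iterated reduction~(\ref{sum decrement}) with $l=2N$ and $j=K$ gives
\begin{equation*}
H(W_{[N-1]},X_{\textbf{\textit{L}}_N})+\sum_{j=2N}^{K}H(W_{[N-1]},Z_j,X_{\textbf{\textit{T}}_j}) \;\geq\; H(W_{[N-1]},X_{\textbf{\textit{T}}_{2N}})+(K-2N+1)N.
\end{equation*}

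The third step is to chain the $\textbf{\textit{B}}_i$ terms down to the empty set. Using (\ref{case 2 BT}) together with (\ref{case 2 B}), one checks that $\textbf{\textit{T}}_{2N}=\textbf{\textit{B}}_N$, and that the family $\{\textbf{\textit{B}}_i\}$ grows by $\textbf{\textit{b}}_{N+i}$ at each step. After the re-indexing $\tilde{\textbf{\textit{B}}}_l := \textbf{\textit{B}}_{l-N}$ (so that the set-subscript matches the cache-subscript as required by (\ref{sum decrement})), the same iterated reduction produces
\begin{equation*}
H(W_{[N-1]},X_{\textbf{\textit{T}}_{2N}})+\sum_{i=1}^{N-1}H(W_{[N-1]},Z_{N+i},X_{\textbf{\textit{B}}_i}) \;\geq\; H(W_{[N-1]})+(N-1)N \;=\; N^2-1,
\end{equation*}
using $\textbf{\textit{B}}_1=\phi$ from (\ref{case 2 null set}) and the independence of the $N$ files. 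Combining these two bounds and substituting into the summed lemmas yields
\begin{equation*}
\tfrac{K(K+1)}{2N}M+\tfrac{K(K-1)}{2}R \;\geq\; (K-2N+1)N + (N^2-1) + N(K-N) + \tfrac{(K-2N+1)(K-2N)}{2},
\end{equation*}
and the final step is to check that this constant simplifies to $\frac{K^2+K-2}{2}$.

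The main subtlety, rather than any genuine obstacle, is the bookkeeping: one must identify $\textbf{\textit{L}}_N$ with the ``phantom'' $\textbf{\textit{T}}_{K+1}$ to apply (\ref{sum decrement}) cleanly, and perform the index shift on the $\textbf{\textit{B}}$-chain so that the cache subscript $N+i$ (which already incorporates a use of the symmetry identity (\ref{symmetry}) absorbed in Lemma~\ref{Lemma: case 2 ABEGL}) lines up with the hypothesis of (\ref{sum decrement}). After that, the concluding arithmetic is purely mechanical.
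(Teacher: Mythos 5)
Your proposal is correct and follows essentially the same route as the paper: add Lemma~\ref{Lemma: case 2 ABEGL} and Lemma~\ref{Lemma: case 2 PQT}, collapse the $\textbf{\textit{T}}_j$ chain via~(\ref{sum decrement}) anchored at $\textbf{\textit{L}}_N=\textbf{\textit{T}}_K\cup\{\textbf{\textit{b}}_K\}$, then collapse the $\textbf{\textit{B}}_i$ chain anchored at $\textbf{\textit{T}}_{2N}$, and close with $\textbf{\textit{B}}_1=\phi$ and $H(W_{[N-1]})=N-1$. Your explicit re-indexing $\tilde{\textbf{\textit{B}}}_l:=\textbf{\textit{B}}_{l-N}$ in fact makes rigorous a step that the paper handles somewhat informally (its annotation ``$\textbf{\textit{T}}_i=\textbf{\textit{B}}_i$, $l=1$, $j=N-1$'' glosses over the fact that the cache subscript in the sum is $N+i$, not $i$), and the final arithmetic checks out.
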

	\begin{proof}
		We have,
		\begin{IEEEeqnarray*}{rl}
			\frac{K(K+1)}{2N}&M+\frac{K(K-1)}{2}R=KM+(KN-2N+1)R+\frac{K(K-2N+1)}{2N}M+\frac{(K-2)(K-2N+1)}{2}R 
			\\
			\overset{(a)}{\geq}&H(W_{[N-1]},X_{\textbf{\textit{L}}_{N}})+\sum_{i=1}^{N-1}H(W_{[N-1]},Z_{N+i},X_{\textbf{\textit{B}}_{i}})+N(K-N)+\sum_{j=2N}^{K}H(W_{[N-1]},Z_{j},X_{\textbf{\textit{T}}_{j}})\\&+\frac{(K-2N+1)(K-2N)}{2}\\
			\overset{(b)}{=}& \left(H(W_{[N-1]},X_{\textbf{\textit{T}}_{K}},X_{\textbf{\textit{b}}_{K}})+\sum_{j=2N}^{K}\!H(W_{[N-1]},Z_{j},X_{\textbf{\textit{T}}_{j}})\right)+\sum_{i=1}^{N-1}\!H(W_{[N-1]},Z_{N+i},X_{\textbf{\textit{B}}_{i}})\\&+\frac{(K+1)(K-2N)+2N^{2}}{2}\\
			\overset{(c)}{\geq}& (K-2N+1)N+\left(H(W_{[N-1]},X_{\textbf{\textit{T}}_{2N}})+\sum_{i=1}^{N-1}H(W_{[N-1]},Z_{N+i},X_{\textbf{\textit{B}}_{i}})\right)\\&+\frac{(K+1)(K-2N)+2N^{2}}{2}\\
			\overset{(d)}{=}& \left(H(W_{[N-1]},X_{\textbf{\textit{B}}_{N-1}},X_{\textbf{\textit{b}}_{2N-1}})+\sum_{i=1}^{N-1}H(W_{[N-1]},Z_{N+i},X_{\textbf{\textit{B}}_{i}})\right)\\&+\frac{(K+1+2N)(K-2N)+2N(N+1)}{2}\\
			\overset{(e)}{\geq}& (N-1)N+H(W_{[N-1]},X_{\textbf{\textit{B}}_{1}})+\frac{(K+1+2N)(K-2N)+2N(N+1)}{2}\\
			\overset{(f)}{=}&H(W_{[N-1]})+\frac{K^{2}+K-2N}{2}
			\geq \frac{K^{2}+K-2}{2}
		\end{IEEEeqnarray*}
		where
		
	\hspace{-6mm}	\begin{tabular}{ll}
			$(a)$& follows from Lemma \ref{Lemma: case 2 ABEGL} and Lemma \ref{Lemma: case 2 PQT},\\  
			$(b)$& follows from (\ref{case 2 ET}), \\
			$(c)$& follows from (\ref{sum decrement}) with $\textbf{\textit{T}}_{i}=\textbf{\textit{T}}_{j}$, $l=2N$, $j=K$ and (\ref{case 2 T}),\\
			$(d)$& follows from (\ref{case 2 BT}), \\
			$(e)$& follows from (\ref{sum decrement}) with $\textbf{\textit{T}}_{i}=\textbf{\textit{B}}_{i}$, $l=1$, $j=N-1$ and (\ref{case 2 B}),\\
			$(f)$& follows from (\ref{case 2 null set}).
		\end{tabular}
	
	\end{proof}
The above observations improve upon the previous results from \cite{maddah2014fundamental,yu2016exact} as shown in TABLE \ref{Table:Case2}.
	\begin{table}[ht]
		\centering
		\begin{tabular}{|c|c|c|c|}
			\hline
			Memory &Rate \cite{maddah2014fundamental,yu2016exact}&Lower Bound\cite{maddah2014fundamental,yu2016exact}& New Lower Bound\\
			\hline&&&\\[-1.4em]
			$\frac{N(K-2)}{K}\leq M\leq \frac{N(K-1)}{K}$&$\frac{K^{2}+K-2}{K(K-1)}-\frac{(K+1)}{N(K-1)}M$&$R\geq 1-\frac{1}{N}M$&$R\geq\frac{K^{2}+K-2}{K(K-1)}-\frac{(K+1)}{N(K-1)}M$\\[0.3em]
			\hline
		\end{tabular}
	\caption{Rate memory tradeoff when $1\leq N\leq \big\lceil \frac{K+1}{2} \big\rceil$}
	\label{Table:Case2}
	\end{table}
	
	\noindent We summarise as:
	\begin{theorem}
		For the $(N,K)$ cache network, when $1\leq N\leq\big\lceil\frac{K+1}{2}\big\rceil$,  
		the exact rate memory tradeoff is given by
		\begin{equation}
		R^{*}(M)=\frac{K^{2}+K-2}{K(K-1)}-\frac{(K+1)}{N(K-1)}M
		\end{equation}
		where $M\geq \frac{N}{K}(K-2)$.
	\end{theorem}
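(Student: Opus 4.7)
The proof is obtained by matching the achievability result of Yu et al. against the converse established just above in Theorem \ref{Main_Result_2}. My plan is as follows.

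First, I would handle achievability on the interval $M \in \big[\frac{N(K-2)}{K}, \frac{N(K-1)}{K}\big]$ by invoking the construction of Yu et al.\ in \cite{yu2016exact}: the two adjacent corner points of their scheme at $M = \frac{N(K-2)}{K}$ (with rate $\frac{2}{K-1}$) and $M = \frac{N(K-1)}{K}$ (with rate $\frac{1}{K}$) both lie on the line $R = \frac{K^{2}+K-2}{K(K-1)} - \frac{K+1}{N(K-1)}M$, so memory sharing between them realizes every intermediate $(M,R)$ on this line. Hence $R^{*}(M)$ is upper bounded by the claimed expression throughout this interval.

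Second, I would rearrange the inequality $\frac{K(K+1)}{2N}M + \frac{K(K-1)}{2}R \geq \frac{K^{2}+K-2}{2}$ from Theorem \ref{Main_Result_2} into the equivalent form
\begin{equation*}
R \geq \frac{K^{2}+K-2}{K(K-1)} - \frac{K+1}{N(K-1)}M,
\end{equation*}
which exactly matches the achievable rate, thereby pinning $R^{*}(M)$ to the stated formula on $\big[\frac{N(K-2)}{K}, \frac{N(K-1)}{K}\big]$. For $M \in \big[\frac{N(K-1)}{K}, N\big]$, the characterization $R^{*}(M) = 1 - \frac{M}{N}$ was already established by Maddah-Ali and Niesen in \cite{maddah2014fundamental}; since the two line segments meet at the common value $\frac{1}{K}$ at the junction $M = \frac{N(K-1)}{K}$, the exact rate memory tradeoff is indeed characterized across the whole range $M \geq \frac{N}{K}(K-2)$.

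All the serious work is already absorbed into the derivation of Theorem \ref{Main_Result_2}, which in turn relies on Lemmas \ref{Lemma: case 2 ABEGL} and \ref{Lemma: case 2 PQT} and the carefully chosen demand families $\textbf{\textit{A}}_i$, $\textbf{\textit{B}}_i$, $\textbf{\textit{E}}_i$, $\textbf{\textit{G}}_i$, $\textbf{\textit{P}}_j$, $\textbf{\textit{Q}}_j$. The present statement is therefore essentially a packaging result, and I anticipate no real obstacle beyond the routine algebraic verification that the two line segments agree at their shared corner point.
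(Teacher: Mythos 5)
Your proposal is correct and follows essentially the same route the paper takes: achievability comes from the Yu et al.\ scheme together with memory sharing between the corner points $\big(\frac{N(K-2)}{K},\frac{2}{K-1}\big)$ and $\big(\frac{N(K-1)}{K},\frac{1}{K}\big)$, and the converse is a direct rearrangement of Theorem~\ref{Main_Result_2}. Your remark that the segment continues past $M=\frac{N(K-1)}{K}$ via the Maddah-Ali--Niesen characterization and that the two pieces agree at the junction value $\frac{1}{K}$ is a slightly more explicit packaging of the same observation the paper makes implicitly.
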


	\section{Conclusions}
	In this paper, we considered the problem of characterizing the exact rate memory tradeoff for the canonical $(N, K)$ cache network, where we focused on the case of large caches. For $\big\lceil\frac{K+1}{2}\big\rceil\leq N\leq K$, in Section III, we proposed a new coded caching scheme and derived a matching lower bound leading to the characterization of the exact rate memory tradeoff when $M\geq\frac{N}{K}\left(K-2+\frac{(K-2+1/N)}{(K-1)}\right)$. For $1\leq N\leq \big\lceil\frac{K+1}{2}\big\rceil$, in Section IV, we derived a new lower bound matching the scheme proposed in \cite{yu2016exact}, thereby providing a characterization of the exact rate memory tradeoff when $M\geq \frac{N}{K}(K-2)$.

	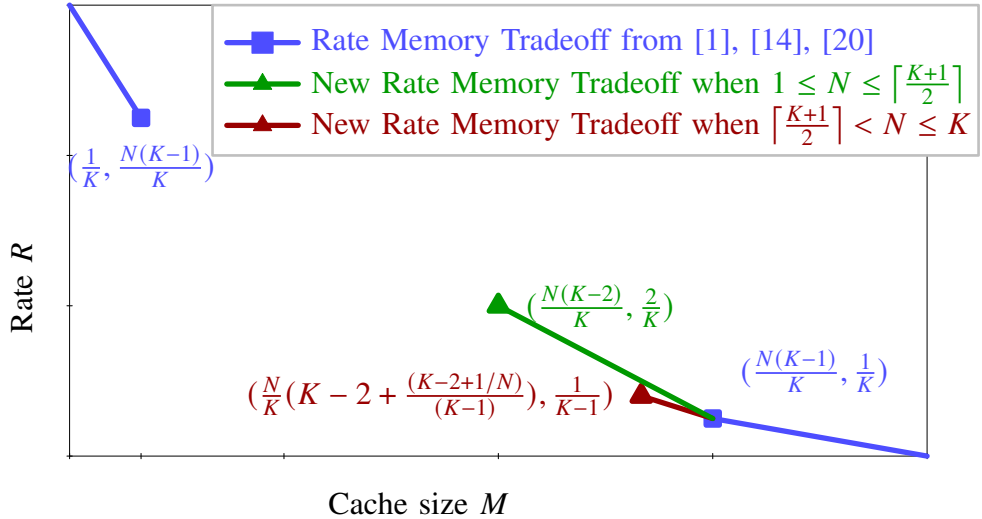
\begin{figure}[!ht]
		\centering
		\begin{tikzpicture}[line cap=round,line join=round,x=3.8cm,y=2cm]

\def\pic{\coordinate (O) at (0,0);
	\draw [ultra thin,step=3,black] (0,0) grid (3,3);
	\foreach \x in {0,1/4,3/4,6/4,9/4,3}
	\draw[shift={(\x,0)},color=black,thin] (0pt,1pt) -- (0pt,-1pt);
	\foreach \y in {0,1,2,3}
	\draw[shift={(0,\y)},color=black,thin] (1pt,0pt) -- (-1pt,0pt);
	\draw[color=black] (6cm,-18pt) node[left] { Cache size $M$};
	\draw[color=black] (-18pt,3.cm) node[left,rotate=90] { Rate $R$};
	
	\draw[smooth,blue!70,mark=otimes,samples=1000,domain=0.0:2.2,mark =$\otimes$,line width=2pt]
	{(0,3)--(1/4,9/4)  node[mark size=2.5pt,line width=2pt,label={below:$(\frac{1}{K},\frac{N(K-1)}{K})$}]{$\pgfuseplotmark{square*}$} (9/4,1/4)node[mark size=2.5pt,line width=2pt,label={above right:$(\frac{N(K-1)}{K},\frac{1}{K})$}]{$\pgfuseplotmark{square*}$}--(3,0)};
	\draw[smooth,red!60!black,samples=1000,domain=0.0:2.2,mark =$\otimes$,line width=2pt]
	{(2,0.4)node[mark size=4pt,line width=2pt,label={left:$(\frac{N}{K}(K-2+\frac{(K-2+1/N)}{(K-1)}),\frac{1}{K-1})$}]{\pgfuseplotmark{triangle*}}--(9/4,1/4)};
	
	\draw[smooth,green!60!black,samples=1000,domain=0.0:2.2,mark =$\otimes$,line width=2pt]
	{(3/2,1)node[mark size=4pt,line width=2pt,label={right:$(\frac{N(K-2)}{K},\frac{2}{K})$}]{$\pgfuseplotmark{triangle*}$}--(9/4,1/4)};

}

\pic

\draw [gray!50!white,line width=1pt,fill=white] (0.5,2)rectangle (3.18,3);
\begin{scope}[shift={(0.3,2.2)}] 
\draw [smooth,samples=1000,domain=0.0:2.2,red!60!black,mark=$\otimes$,line width=2pt] 
{(0.25,0) --node [mark size=4pt,line width=0.1pt]{$\pgfuseplotmark{triangle*}$} (0.5,0)}
node[right]{New Rate Memory Tradeoff when $\big\lceil\frac{K+1}{2}\big\rceil< N\leq K$};

\draw [yshift=0.75\baselineskip,smooth,samples=1000,domain=0.0:2.2,green!60!black,mark=$\otimes$,line width=2pt] 
{(0.25,0) --node [mark size=4pt,line width=0.1pt]{$\pgfuseplotmark{triangle*}$} (0.5,0)}
node[right]{New Rate Memory Tradeoff when $1\leq N\leq \big\lceil\frac{K+1}{2}\big\rceil$};

\draw [yshift=1.5\baselineskip,smooth,samples=1000,domain=0.0:2.2,blue!70,mark=$\otimes$,line width=2pt] 
{(0.25,0)--node [mark size=4pt,line width=0.1pt]{$\pgfuseplotmark{square*}$} (0.5,0)}
node[right]{Rate Memory Tradeoff from \cite{maddah2014fundamental,yu2016exact,chen2014fundamental}};

\end{scope}

\end{tikzpicture}
		\caption{Exact rate memory tradeoff for the $(N,K)$ cache network}
		\label{fig:(N,K)bound}
	\end{figure}

\end{document}